\newcommand{\app}{\textsc{Cogra}}
\newcommand{\greta}{{\small GRETA}}
\newcommand{\anysem}{{\small \textsf{ANY}}}
\newcommand{\nextsem}{{\small \textsf{NEXT}}}
\newcommand{\contsem}{{\small \textsf{CONT}}}
\newcommand{\return}{{\small \textsf{RETURN}}}
\newcommand{\pattern}{{\small \textsf{PATTERN}}}
\newcommand{\semantics}{{\small \textsf{SEMANTICS}}}
\newcommand{\where}{{\small \textsf{WHERE}}}
\newcommand{\group}{{\small \textsf{GROUP-BY}}}
\newcommand{\within}{{\small \textsf{WITHIN}}}
\newcommand{\slide}{{\small \textsf{SLIDE}}}
\newcommand{\seq}{{\small \textsf{SEQ}}}
\newcommand{\mynext}{{\small \textsf{NEXT}}}
\newcommand{\mycount}{{\small \textsf{COUNT}}}
\newcommand{\mymin}{{\small \textsf{MIN}}}
\newcommand{\mymax}{{\small \textsf{MAX}}}
\newcommand{\mysum}{{\small \textsf{SUM}}}
\newcommand{\myavg}{{\small \textsf{AVG}}}
\renewcommand{\algorithmiccomment}[1]{\bgroup\hfill//~#1\egroup}
\algnewcommand\algorithmicswitch{\textbf{switch}}
\algnewcommand\algorithmiccase{\textbf{case}}
\algnewcommand\algorithmicassert{\texttt{assert}}
\algnewcommand\Assert[1]{\State \algorithmicassert(#1)}%
\renewcommand{\algorithmiccomment}[1]{/* #1 */}
\newcommand{\nop}[1]{}
\newcommand{\rem}[1]{\marginpar{\flushleft{#1}}}
\renewcommand{\rem}[1]{} 
\renewcommand{\algorithmiccomment}[1]{/* #1 */}
\newtheorem{definition}{Definition}
\newtheorem{example}{Example}
\newtheorem{theorem}{Theorem}[section]
\newcommand{\eat}[1] {}
 \newlength{\hoehe}
 \newlength{\breite}
\title{\fontsize{15}{15}\selectfont Event Trend Aggregation Under Rich Event Matching Semantics\\\vspace*{1cm}
\large Technical Report\\
June 1, 2018
\vspace*{1cm}}
\author{\large Olga Poppe$^1$, Chuan Lei$^2$, Elke A. Rundensteiner$^3$, and David Maier$^4$}
\date{\Large 
\large 
$^1$Microsoft Jim Gray Systems Lab, 634 W Main St, Madison, WI 53703\\
$^2$IBM Research - Almaden, 650 Harry Rd, San Jose, CA 95120\\
$^3$Worcester Polytechnic Institute, 100 Institute Rd, Worcester, MA 01609\\
$^4$Portland State University, 1825 SW Broadway, Portland, OR 97201\\
\vspace*{3mm}
$^1$olpoppe@microsoft.com,\;$^2$chuan.lei@ibm.com,\;$^3$rundenst@wpi.edu,\;$^4$maier@pdx.edu
\vfill
}
\newcommand{\switch}{%
  \mathcode`+=\numexpr\mathcode`+ + "1000\relax 
  \mathcode`*=\numexpr\mathcode`* + "1000\relax
}
\begin{document}
\maketitle

\begin{spacing}{0.8}
{\footnotesize \noindent \textbf{Copyright} \copyright{} 2018 by
Olga Poppe. Permission to make digital or hard copies of all or
part of this work for personal use is granted without fee provided
that copies bear this notice and the full citation on the first
page. To copy otherwise, to republish, to post on servers or to
redistribute to lists, requires prior specific permission. }
\end{spacing}

\clearpage
\pagestyle{fancy}

\clearpage
\tableofcontents

\pagenumbering{arabic}
\setcounter{page}{1}

%
%

\newpage
\begin{abstract}
Streaming applications from health care analytics to algorithmic trading deploy Kleene queries to detect and aggregate event trends. Rich event matching semantics determine how to compose events into trends. The expressive power of state-of-the-art systems remains limited in that they do not support the rich variety of these semantics. Worse yet, they suffer from long delays and high memory costs because they opt to maintain aggregates at a fine granularity. 
To overcome these limitations, our Coarse-Grained Event Trend Aggregation (\app) approach supports this rich diversity of event matching semantics within one system. Better yet, \app\ incrementally maintains aggregates at the coarsest granularity possible for each of these semantics. In this way, \app\ minimizes the number of aggregates -- reducing both time and space complexity. 
Our experiments demonstrate that \app\ achieves up to four orders of magnitude speed-up and up to eight orders of magnitude memory reduction compared to state-of-the-art approaches.
\end{abstract}

\section{Introduction}
\label{sec:introduction}


Complex Event Processing (CEP) is a technology for supporting streaming applications from health care analytics to algorithmic trading. 
CEP systems continuously evaluate Kleene pattern queries against high-rate streams of primitive events to detect higher-level \textit{event trends}. 
In contrast to traditional event sequences of \textit{fixed} length~\cite{LRGGWAM11}, event trends have an \textit{arbitrary} length~\cite{PLAR17,PLRM18}.
Various event matching semantics were defined in the CEP literature to determine trend contiguity~\cite{ADGI08, WDR06, ZDI14}. 
For example, \textit{phases of a contiguously increasing heartbeat} are detected in health care analytics, while \textit{non-contiguous declining stocks} are of interest for algorithmic trading. 
Aggregation functions are applied to these trends to derive summarized insights, e.g., the \textit{maximal heartbeat} or the \textit{average price}. CEP applications must react to critical changes of these aggregates in real time.
%
%
We now describe three use cases of time-critical trend aggregation requiring diverse semantics.

$\bullet$ \textbf{\textit{Health care analytics}}.
Cardiac arrhythmia is a serious heart disease in which the heartbeat is too fast, too slow, or irregular. It can lead to life-threatening complications causing about 325K sudden cardiac deaths in US per year~\cite{bch}. Thus, the prompt detection of such abnormal heartbeat is critical to enable immediate lifesaving measures.
Query $q_1$ detects \textit{minimal} and \textit{maximal} heartbeat during passive physical activities (e.g., reading, watching TV). 
Query $q_1$ consumes a stream of heart rate measurements of intensive care patients. Each event carries a time stamp in seconds, a patient identifier, an activity identifier, and a heart rate. 
For each patient, $q_1$ detects contiguously increasing heart rate measurements during a time window of 10 minutes that slides every 30 seconds. 
No measurements may be skipped in between matched events per patient, as expressed by the \textit{contiguous} semantics.

\begin{lstlisting}[]
$q_1:\;\return\ \text{patient},\ \mymin\text{(M.rate)},\ \mymax\text{(M.rate)}$
   $\;\pattern\ \text{Measurement M+}$
   $\;\semantics\ \text{contiguous}$
   $\;\where\ \text{[patient]}\ \myand\ \text{M.rate} < \mynext\text{(M).rate}\ \myand\ \text{M.activity = passive}$
   $\;\group\ \text{patient}$
   $\;\within\ \text{10 minutes}\ \slide\ \text{30 seconds}$
\end{lstlisting}

$\bullet$ \textbf{\textit{Ridesharing service}} companies such as Uber offer peer-to-peer ridesharing with different levels of services depending on the riders' needs. With thousands of drivers and over 150 requests per minute in New York City~\cite{uber1}, real-time traffic analytics and ride management is  challenging.
%
%
Query $q_2$ computes the \textit{number of Uber pool trips} that an Uber driver can complete when some riders cancel their trips after contacting the driver during a time window of 10 minutes that slides every 30 seconds.
Each trip starts with a single \textit{Accept} event, any number of \textit{Call} and \textit{Cancel} events, followed by a single \textit{Finish} event. Each event carries a session identifier associated with the driver. All events that constitute one trip must carry the same session identifier as required by the predicate \textit{[driver]}. The \textit{skip-till-next-match} semantics allows query $q_2$ to skip irrelevant events such as in-transit, drop-off, etc. Query $q_2$ and similar queries are commonly used for Uber event stream processing and analytics, including aggregation and pattern detection, forecasting, and clustering~\cite{uber-cep}.

\begin{lstlisting}[]
$q_2:\;\return\;\text{driver},\;\mycount(*)$
   $\;\pattern\;\seq(\text{Accept},\;(\seq(\text{Call, Cancel))+,}\;\text{Finish)}$
   $\;\semantics\ \text{skip-till-next-match}$
   $\;\where\ \text{[driver]}\ \group\ \text{driver}$
   $\;\within\ \text{10 minutes}\ \slide\ \text{30 seconds}$
\end{lstlisting}




$\bullet$ \textbf{\textit{Algorithmic trading}} platforms evaluate trend aggregation queries against high-rate streams of financial transactions to identify and exploit short-term profit opportunities and avoid pitfalls. 
Stock trends of companies that belong to the same industrial sector tend to move as a group~\cite{K02}. Query $q_3$ identifies dependencies between companies within a sector to predict future trends. Each transaction carries a time stamp in seconds, a company identifier, a sector identifier, and a price. 
The query detects down-trends for a company $A$ and computes the \textit{average price} of the following trends for a company $B$ during a time window of 10 minutes that slides every 10 seconds. 
While detecting down-trends, the query ignores local price fluctuations by skipping increasing price records. Decreasing records may also be ignored to preserve opportunities for longer and thus more reliable trends~\cite{K02}. This behavior is enabled by the \textit{skip-till-any-match} semantics.

\begin{lstlisting}[]
$q_3:\;\return\ \text{sector, A.company, B.company,}\ \myavg(\text{B.price})$
   $\;\pattern\ \seq(\text{Stock A+, Stock B+})$
   $\;\semantics\ \text{skip-till-any-match}$
   $\;\where\ \text{[A.company]}\ \myand\ \text{[B.company]}\ \myand\ \text{A.price}>\mynext\text{(A).price}$
   $\;\group\ \text{sector, A.company, B.company}$
   $\;\within\ \text{10 minutes}\ \slide\ \text{10 seconds}$
\end{lstlisting}

\textbf{State-of-the-Art Approaches} to event aggregation can be divided into the following groups (Table~\ref{tab:spectrum}).

$\bullet$ \textit{\textbf{CEP approaches}} such as SASE~\cite{ZDI14}, Cayuga~\cite{DGPRSW07}, and ZStream \cite{MM09} support Kleene closure. However, Cayuga and ZStream do not consider the diverse event matching semantics.
While their languages support aggregation, they do not provide any optimization techniques to compute aggregation on top of Kleene patterns. Without  optimization, they must utilize a two-step approach that constructs all trends prior to their aggregation. This approach suffers from long delays or even fails to terminate due to the exponential time complexity of event trend construction (Section~\ref{sec:evaluation}). 

In contrast, A-Seq~\cite{QCRR14} computes aggregation of \textit{fixed-length event sequences online}, i.e., without first constructing these sequences. However, A-Seq does not support Kleene closure. Thus, it does not tackle the exponential complexity of event trends. 
\greta~\cite{PLRM18} proposes \textit{online event trend aggregation}. Since it avoids the expensive event trend construction step, it reduces the time complexity from exponential to quadratic in the number of events compared to the two-step approaches. 
However, \greta\ supports only one kind of event matching semantics, namely, skip-till-any-match.
Moreover, it maintains aggregates at the finest granularity per each matched event. We will show that its complexity is not optimal in many cases. This explains why \greta\ returns aggregation results with over an hour long delay (Section~\ref{sec:evaluation}). Such long delays are unacceptable for time-critical applications.

\vspace{-2mm}
\begin{table}[h]
\centering
\begin{tabular}{|p{0.9cm}|p{3.8cm}|p{3.2cm}|}
\hline
& \textbf{Event sequences} 
& \textbf{Event trends} \\
\hline
\textbf{Two-step}
& Flink~\cite{flink}, Esper~\cite{esper}, Oracle Stream Analytics~\cite{oracle} 
& SASE~\cite{ZDI14},~Cayuga~\cite{DGPRSW07}, ZStream~\cite{MM09} \\
\hline
\textbf{Online}
& A-Seq~\cite{QCRR14} 
& \greta~\cite{PLRM18} \\
\hline
\end{tabular}
\vspace{1mm}
\caption{State-of-the-art event aggregation approaches}
\label{tab:spectrum}
\end{table}
\vspace{-4mm}

$\bullet$ \textit{\textbf{Streaming systems}}. Industrial streaming systems such as Flink~\cite{flink}, Esper~\cite{esper}, and Oracle Stream Analytics~\cite{oracle} only support fixed-length event sequences. They do \textit{not} support Kleene closure. They construct all sequences prior to their aggregation and thus follow the two-step approach.
Streaming approaches~\cite{AW04, GHMAE07, KWF06, LMTPT05, THSW15} evaluate traditional Select-Project-Join queries, i.e., their execution paradigm is set-based. They support neither event sequences nor Kleene closure. They construct join results prior to their aggregation. Thus, they define incremental aggregation of \textit{single raw events} only.

\textbf{Challenges}.
We tackle the following open problems. 

$\bullet$ \textbf{\textit{Real-time event trend aggregation}}.
Kleene patterns match event trends of variable, statically unknown length. The number of these trends may grow exponentially in the number of events~\cite{ZDI14}. Thus, any real-time solution must aim to aggregate trends \textit{without first constructing them} and ideally even without storing all matched events. At the same time, correctness must be guaranteed, i.e., the same aggregates must be returned as by the two-step approach.

$\bullet$ \textbf{\textit{Rich event matching semantics}} were defined in the CEP literature~\cite{ADGI08, WDR06, ZDI14} to enable expressive event queries in different application scenarios. These semantics range from the most restrictive contiguous semantics (query $q_1$) to the most flexible skip-till-any-match semantics (query $q_3$). Their execution strategies differ significantly, making the seamless support of online event trend aggregation on top of these diverse semantics challenging.

$\bullet$ \textbf{\textit{Expressive predicates}} on adjacent events in a trend determine whether an event is matched depending on other events in a trend. Since a new event may be adjacent to \textit{any} previous event under the skip-till-any-match semantics, \textit{all} matched events must be kept. The need to store all matched events contradicts the online aggregation requirement that aims to incrementally update aggregates upon event arrival and discard these events immediately thereafter. 

\begin{figure}[t]
\centering
\includegraphics[width=0.5\columnwidth]{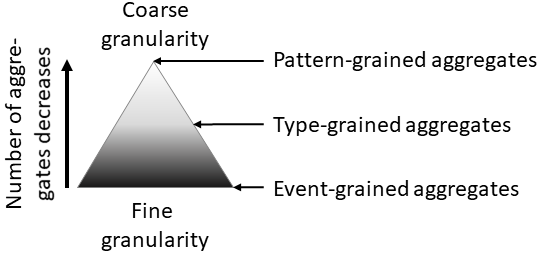}  
\vspace{-2mm}
\caption{Event trend aggregation at multiple granularities}
\label{fig:idea}
\end{figure} 

\textbf{Our Proposed COGRA Approach} is the first to define online event trend aggregation under rich event matching semantics at multiple granularities. 
\app\ pushes aggregation inside Kleene closure computation. Thus, it avoids the event trend construction step with exponential complexity. 
Better yet, depending on the event matching semantics, \app\ adaptively selects the \textit{coarsest possible granularity} at which it incrementally computes trend aggregation. These granularities range from coarse (per pattern), to medium (per event type), to fine (per matched event) as per Figure~\ref{fig:idea}. \app\ minimizes the number of aggregates to be maintained and discards all events as soon as they have been used to update the aggregates. Thus,
our approach represents a win-win solution that reduces both time and space complexity of trend aggregation compared to state-of-the-art~\cite{flink, PLRM18, QCRR14, ZDI14}.

\textbf{Contributions}. 
The key innovations of \app\ include:

1)~We define the problem of real-time event trend aggregation under rich event matching semantics. Based on these semantics and other query features, we determine the coarsest granularity at which trend aggregates are maintained.

2)~For each granularity, we propose efficient data structures and algorithms to incrementally compute event trend aggregation. We prove the correctness of these algorithms.

3)~Our \app\ strategies are shown to reduce both time and space complexity compared to state-of-the-art approaches. We prove that \app\ achieves optimal time complexity. 

4)~Our experiments using synthetic and real data sets~\cite{stockStream,RS12} demonstrate that our \app\ approach achieves up to four orders of magnitude speed-up and uses up to eight orders of magnitude less memory compared to Flink~\cite{flink}, SASE~\cite{ZDI14}, \greta~\cite{PLRM18}, and A-Seq~\cite{QCRR14} (Table~\ref{tab:spectrum}).

\textbf{Outline}. 
Section~\ref{sec:model} describes our data and query model.
Section~\ref{sec:overview} provides an overview of the \app\ framework.
Depending on event matching semantics, Sections~\ref{sec:type-grained-aggregator}--\ref{sec:pattern-grained-aggregator} define incremental trend aggregation at different granularities.
We consider the remaining query clauses in Section~\ref{sec:other}. 
We discuss how to extend our event query language in Section~\ref{sec:discussion}. 
Section~\ref{sec:evaluation} describes our experimental study. 
Section~\ref{sec:related_work} discusses related work, while
Section~\ref{sec:conclusions} concludes the paper.

\section{Data and Query Model}
\label{sec:model}

\subsection{Basic Notions and Assumptions}
\label{sec:basic}

\textbf{\textit{Time}} is represented by a linearly ordered set of time points $(\mathbb{T},\leq)$, where $\mathbb{T} \subseteq \mathbb{Q^+}$ (the non-negative rational numbers). 
An \textbf{\textit{event}} is a message indicating that something of interest to the application happened in the real world. An event $e$ has a \textbf{\textit{time stamp}} $e.time \in \mathbb{T}$ assigned by the event source. 
An event $e$ belongs to a particular \textbf{\textit{event type}} $E$, denoted \textit{e.type=E} and described by a \textit{schema} that specifies the set of \textit{event attributes} and the domains of their values.
Events are sent by event producers (e.g., sensors) on an \textbf{\textit{event stream}} $I$. An event consumer (e.g., health care system) continuously monitors the stream with \textit{event queries}. We borrow the query language and semantics from SASE~\cite{ADGI08, WDR06, ZDI14}. 
Our example queries $q_1$--$q_3$ in Section~\ref{sec:introduction} are expressed using this syntax.

\begin{definition}[\textbf{\textit{Pattern}}]
A \textit{\textbf{pattern}} has the form
$E$, $P+$, or \seq$(P_1,P_2)$,
where 
$E$ is an event type,
$P,P_1,P_2$ are patterns, 
$+$ is a Kleene plus operator, and 
\seq\ is an event sequence operator.
$P$ is a \textit{\textbf{sub-pattern}} of $P+$,
while $P_1$ and $P_2$ are \textit{\textbf{sub-patterns}} of $\seq(P_1,P_2)$.
If an pattern contains a Kleene plus operator, it is called a \textbf{\textit{Kleene pattern}}. It is an \textbf{\textit{event sequence pattern}} otherwise.
The length of a pattern is the number of event types in it. 
\label{def:pattern}
\end{definition}

We focus on Kleene patterns that allow us to specify arbitrarily long event pattern matches, called \textbf{\textit{event trends}} (Definitions~\ref{def:event-trend-under-STAM}--\ref{def:event-trend-under-CONT}).
To simplify our discussion, we consider Kleene patterns that do not contain negation, Kleene star, optional sub-patterns, conjunction, nor disjunction. Also, an event type may appear at most once in a pattern. We sketch straightforward extensions of our approach to relax these assumptions in Section~\ref{sec:discussion}.

\subsection{Event Matching Semantics}
\label{sec:semantics}

Event matching semantics~\cite{ADGI08, WDR06, ZDI14} constrain event contiguity in a trend to express queries for diverse streaming applications (Section~\ref{sec:introduction}). These semantics differentiate between \textbf{\textit{relevant events}}, i.e., events that can extend an existing partial trend (Definition~\ref{def:finished-partial-trend}), and \textbf{\textit{irrelevant events}} that cannot. Relevant events either must extend existing trends or can be skipped to preserve opportunities for alternative trends. Irrelevant events either invalidate partial trends or can be skipped. 

\begin{figure}[t]
\centering
\includegraphics[width=0.6\columnwidth]{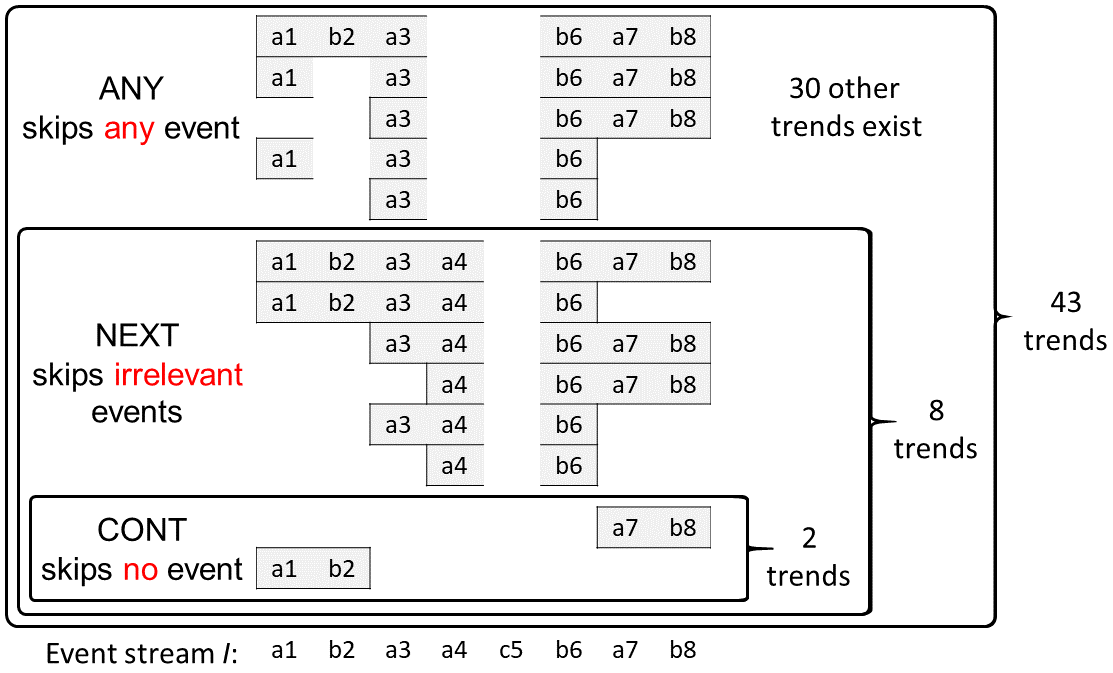}  
\caption{Event trends matched by the Kleene pattern \textit{P=(\seq(A+,B))+} under various event matching semantics}
\label{fig:ess}
\end{figure} 

\begin{example}
In Figure~\ref{fig:ess}, the pattern \textit{P = (\seq(A+,B))+} is evaluated under various event matching semantics against the stream $I$ (depicted at the bottom of the figure). In the stream, letters denote types, while numbers represent time stamps, e.g., $a1$ is an event of type $A$ with time stamp 1. Matched trends are depicted above the stream. They range from the shortest contiguous trend $(a1,b2)$%
\footnote[2]{Constraints on minimal trend length exclude too short and thus not meaningful event trends. They are considered in Section~\ref{sec:discussion}.}
to the longest non-contiguous trend $(a1,b2,a3,a4,b6,a7,b8)$.
\end{example}

\textbf{Skip-Till-Any-Match Semantics} (\anysem) is the most flexible semantics that detects \textit{all possible trends} as follows. 
For each event $e$ and each partial trend $tr$ that can be extended by $e$, two possibilities are considered: 
(1)~$e$ is appended to the trend $tr$ to form a longer trend $tr'=(tr,e)$, and 
(2)~$e$ is skipped and the trend $tr$ remains unchanged to preserve opportunities for alternative longer trends. 
If an event $e$ can extend all trends, then $e$ doubles the number of trends. Thus, the number of trends grows exponentially in the number of events in the worst case.
Skip-till-any-match skips irrelevant events.
Query $q_3$ in Section~\ref{sec:introduction} is evaluated under this semantics.

\begin{example}
In Figure~\ref{fig:ess}, when $a7$ arrives, the trend $(a3,$ $b6)$ is extended to $(a3,b6,a7)$ and the original trend $(a3,b6)$ is also kept. Based on only eight events in the stream, 43 trends are detected. Only some of them are shown for compactness. Irrelevant events are ignored, e.g., $c5$. 
\end{example}

\begin{definition}[\textbf{\textit{Event Trend Under Skip-Till-Any-Match}}]
An event type $E$ matches an event $e \in I$ of type $E$ under skip-till-any-match, 
denoted $e \in \mathit{trends}_\mathit{any}(E,I)$.

If $P_1$ and $P_2$ are patterns,
$(e_1,\dots,e_m) \in \mathit{trends}_\mathit{any}(P_1,I)$,
$(e_{m+1},\dots,e_k) \in \mathit{trends}_\mathit{any}(P_2,I)$, and
$e_m.time < e_{m+1}.$ $time$, then
the event sequence pattern \seq $(P_1,P_2)$ matches the trend
$s=(e_1,\dots,e_k)$ under skip-till-any-match, 
denoted $s \in \mathit{trends}_\mathit{any}(\seq(P_1,P_2),I)$.

If $P$ is a pattern, 
$\forall s_l \in \{s_1, \dots, s_k\}$.
$s_l \in \mathit{trends}_\mathit{any}(P,I)$ and 
$s_l.end.time < s_{l+1}.start.time$, then
the Kleene plus pattern $P+$ matches the trend
$tr=(s_1 : \dots : s_k)$ under skip-till-any-match,
denoted $tr \in \mathit{trends}_\mathit{any}(P+,I)$, 
where ``:" is concatenation.
Start, mid, and end events of a trend are defined in Table~\ref{tab:start-mid-end}.
\label{def:event-trend-under-STAM}
\end{definition}

\vspace*{-3mm}
\begin{table}[h]
\centering
\begin{tabular}{|l||l|p{3.2cm}|l|}
\hline
Trend $tr$
& $tr.start$
& $tr.mid$
& $tr.end$
\\
\hline\hline
$e$
& $e$
& $\emptyset$
& $e$
\\
\hline
$(e_1,\dots,e_k)$
& $e_1$
& $\{e_2,\dots,e_{k-1}\}$
& $e_k$
\\
\hline
$(s_1 : \dots : s_k)$ 
& $s_1.start$
& $\{ e \mid e \text{ is in } tr, e \neq s_1.$ $start, e \neq s_k.end \}$
& $s_k.end$
\\
\hline
\end{tabular}
\vspace*{1mm}
\caption{\textit{Start}, \textit{mid}, and \textit{end} events of a trend}
\label{tab:start-mid-end}
\end{table}
\vspace*{-4mm}

\textbf{Skip-Till-Next-Match Semantics} (\nextsem) is more restrictive than \anysem\ because \nextsem\ requires that all relevant events are matched. It allows skipping \textit{irrelevant} events however. Query $q_2$ in Section~\ref{sec:introduction} is evaluated under this semantics. 

\begin{example}
In Figure~\ref{fig:ess}, the trend $(a3,b6)$  does not conform to this semantics since it skipped over the relevant event $a4$. In contrast, the trend $(a3,a4,b6)$ is valid since it skips no relevant events in between matched events.
\end{example}

\begin{definition}[\textbf{\textit{Event Trend Under Skip-Till-Next-Match}}]
If $tr \in \mathit{trend}_\mathit{any}(P,I)$ and
$\nexists tr' \in \mathit{trend}_\mathit{any}(P,$ $I)$ with
$tr.start = tr'start$, 
$tr.end = tr'.end$, and 
$tr.mid \subseteq tr'.mid$, then
the pattern $P$ matches the trend $tr$ under skip-till-next-match, 
denoted $tr \in \mathit{trends}_\mathit{next}(P,I)$.
\label{def:event-trend-under-STNM}
\end{definition}

\textbf{Contiguous Semantics} (\contsem) is the most restrictive semantics since it \textit{does not skip events}. Query $q_1$ in Section~\ref{sec:introduction} detects contiguous trends.

\begin{example}
In Figure~\ref{fig:ess}, $(a1,b2)$ and $(a7,b8)$ are the only contiguous trends. Since $c5$ cannot be ignored, $a1$--$a4$ cannot form contiguous trends with later events.
\end{example}

\begin{definition}[\textbf{\textit{Event Trend Under Contiguous Semantics}}]
If $tr \in \mathit{trends}_\mathit{next}(P,I)$ and
$\nexists e \in I$ such that 
$tr.start.time < e.time < tr.end.time$ and 
$e$ is not part of the trend $tr$, then
the pattern $P$ matches the trend $tr$ under the contiguous semantics, 
denoted $tr \in \mathit{trends}_\mathit{cont}(P,I)$.
\label{def:event-trend-under-CONT}
\end{definition}

Figure~\ref{fig:ess} illustrates the containment relations among the sets of trends matched by the pattern $P$ under various semantics.

\begin{definition}[\textbf{\textit{Finished vs Partial Event Trend}}]
Given a pattern $P$ and its sub-pattern $P'$ evaluated under an event matching semantics $S$, $tr \in \mathit{trends}_S(P,I)$ is a \textbf{\textit{finished trend}} of $P$, while $tr' \in \mathit{trends}_S(P',I)$ is a \textbf{\textit{partial trend}} of $P$.
\label{def:finished-partial-trend}
\end{definition}

\subsection{Event Trend Aggregation Query}
\label{sec:query}

An event trend aggregation query constrains the trends that are detected under various event matching semantics by predicates, grouping, and windows as follows.

\begin{definition}[\textbf{\textit{Event Trend Aggregation Query}}]
An \textit{\textbf{event trend aggregation query}} $q$ consists of six clauses:

$\bullet$ Aggregation result specification (\return\ clause),

$\bullet$ Kleene pattern $P$ (\pattern\ clause),

$\bullet$ Event matching semantics $S$ (\semantics\ clause),

$\bullet$ Predicates $\theta$ (optional \where\ clause),

$\bullet$ Grouping $G$ (optional \group\ clause), and

$\bullet$ Window $w$ (\within\ and \slide\ clause).

If a trend $tr$ is matched by the pattern $P$ 
under the semantics $S$,
all events in $tr$ satisfy the predicates $\theta$, 
carry the same values of the grouping attributes $G$, and 
are within one window $w$, 
then the trend $tr$ is matched by the query $q$.
\label{def:query}
\end{definition}

Within each window of query $q$, matched trends are grouped by the values of grouping attributes $G$. Aggregates are computed per group. We focus on distributive (such as \mycount, \mymin, \mymax, \mysum) and algebraic aggregation functions (such as \myavg) since they can be computed incrementally~\cite{Gray97}. 

$\bullet$ \textbf{\textit{Count}}.
\mycount$\mathit{(*)}$ returns the number of trends per group.
Let \textit{tr.}\mycount \textit{(E)} be the number of events of type $E$ in a trend $tr$. \mycount\textit{(E)} corresponds to the sum of \textit{tr.}\mycount$(E)$ of all trends $tr$ per group.

$\bullet$ \textbf{\textit{Minimum and Maximum}}.
Let \textit{tr.}\mymin\textit{(E.attr)} be the minimal value of an attribute \textit{attr} of events of type $E$ in a trend $tr$.
\mymin\textit{(E.attr)} returns the minimal value of \textit{tr.}\mymin$(E.attr)$ of all trends $tr$ per group.
\mymax\textit{(E.attr)} is defined analogously to \mymin$(E.\mathit{attr})$.

$\bullet$ \textbf{\textit{Summation and Average}}.
Let \textit{tr.}\mysum\textit{(E.attr)} be the sum of values of an attribute \textit{attr} of events of type $E$ in a trend $tr$.
\mysum\textit{(E.attr)} corresponds to the sum of \textit{tr.}\mysum\textit{(E.attr)} of all trends $tr$ per group.
Lastly, \myavg\textit{(E.attr)} = \mysum\textit{(E.attr)} / \mycount\textit{(E)} per group.

In this paper, we illustrate trend count computation, i.e., \mycount (*). The same principles apply to other aggregation functions (Section~\ref{sec:discussion}).
\section{Cogra Approach Overview}
\label{sec:overview}

To support time-critical streaming applications (Section~\ref{sec:introduction}), we solve the following \textbf{\textit{Event Trend Aggregation Problem}}. 
Given an event trend aggregation query $q$ (Definition~\ref{def:query}) evaluated under an event matching semantics (Section~\ref{sec:semantics}) over an event stream $I$, our goal is to compute the aggregation results of $q$ with \textit{minimal latency}.

\begin{figure}[h]
\centering
\includegraphics[width=0.6\columnwidth]{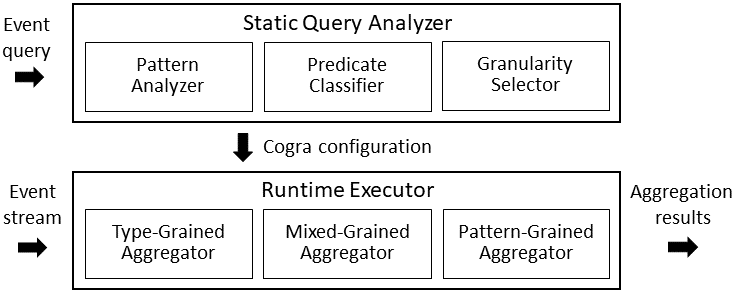}  
\caption{\app\ framework}
\label{fig:system}
\end{figure} 

Figure~\ref{fig:system} illustrates our \app\ framework.
In order to select the granularity at which the aggregates are maintained for a query $q$ (Section~\ref{sec:granularity-selector}), 
the \textbf{\textit{Static Query Analyzer}}
analyzes the pattern of $q$ (Section~\ref{sec:pattern-analyzer}) and
classifies the predicates of $q$ (Section~\ref{sec:predicate-classifier}). 
The results of this query analysis are encoded into the \app\ configuration to guide our \textbf{\textit{Runtime Executor}} (Sections~\ref{sec:pattern-grained-aggregator}--\ref{sec:mixed-grained-aggregator}).

\subsection{Pattern Analyzer} 
\label{sec:pattern-analyzer}

To facilitate the analysis of a pattern $P$, we translate $P$ into its Finite State Automaton (FSA)-based representation~\cite{ADGI08, DGPRSW07, MM09, WDR06, ZDI14}. We informally describe this translation here, while the algorithm can be found in the extended version~\cite{PLRM18}.

\textbf{\textit{States}} are labeled by event types in $P$. The first state is the \textbf{\textit{start type}} \textit{start(P)} and the final state is the \textbf{\textit{end type}} \textit{end(P)}. All other states are labeled by \textbf{\textit{middle types}} \textit{mid(P)}. According to our assumption in Section~\ref{sec:basic}, a type may occur at most once in $P$. Thus, state labels are distinct. There is exactly one start type, exactly one end type, and any number of middle types in $P$~\cite{PLRM18}. 
In Figure~\ref{fig:template}, \textit{start(P) = A}, \textit{end(P) = B}, and $\mathit{mid(P)} = \emptyset$, meaning that a trend matched by $P$ always starts with an $a$ and ends with a $b$.

\textbf{\textit{Transitions}} are labeled by operators in $P$. They connect types of events that are adjacent in a trend matched by $P$ (Definition~\ref{def:adjacent_events}). If a transition connects a type $E'$ with a type $E$, then $E'$ is called a \textit{\textbf{predecessor type}} of $E$, denoted $E' \in P.\mathit{predTypes}(E)$. 
In Figure~\ref{fig:template}, $P.\mathit{predTypes}(A) = \{A,B\}$ and $P.\mathit{predTypes}(B) = \{A\}$, meaning that an $a$ may be preceded by previously matched $a$'s and $b$', while a $b$ is preceded by previously matched $a$'s.

\begin{figure}[t]
\centering
\includegraphics[width=0.2\columnwidth]{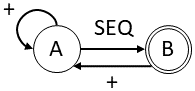}  
\caption{FSA representation of the pattern \textit{P=(\seq(A+,B))+}}
\label{fig:template}
\end{figure}

\begin{definition}[\textbf{\textit{Adjacent Events, Predecessor Event}}]
Let $e_p,e \in I$ be events such that $e_p$ is in a partial trend matched by a query $q$ and $e$ is new. 
The events $e_p$ and $e$ are \textbf{\textit{adjacent under the skip-till-any-match semantics}} in a window $w$ if the following conditions hold:
(1)~$e_p.type \in P.predTypes(e.type)$,
(2)~$e_p.time < e.time$,
(3)~$e_p$ and $e$ satisfy the predicates $\theta$,
(4)~$e_p$ and $e$ have the same values of grouping attributes $G$, and
(5)~$e_p$ and $e$ belong to the window $w$.

The events $e_p$ and $e$ are \textbf{\textit{adjacent under the skip-till-next-match semantics}} in a window $w$ if conditions 1--5 above hold and 
$\nexists e' \in I$ such that $e'.time < e.time$ and $e_p$ and $e'$ are adjacent in $w$ under skip-till-any-match.

The events $e_p$ and $e$ are \textbf{\textit{adjacent under the contiguous semantics}} in a window $w$ if conditions 1--5 above hold and 
$\nexists e' \in I$ such that $e_p.time < e'.time < e.time$.

If $e_p$ and $e$ are adjacent in a window $w$, then 
$e_p$ is called a \textbf{\textit{predecessor event}} of $e$ in $w$.
\label{def:adjacent_events}
\end{definition}

\subsection{Predicate Classifier} 
\label{sec:predicate-classifier}

We distinguish between predicates on single events and predicates on adjacent events since they determine the granularity at which aggregates are maintained (Section~\ref{sec:granularity-selector}).

\textbf{\textit{Predicates on single events}} either filter or partition the stream. 
For example, query $q_1$ in Section~\ref{sec:introduction} uses the predicate \textit{(M.activity = passive)} to selects those measurements that were taken during passive activities and the predicate \textit{[patient]} to partition the stream by patient identifier.

\textbf{\textit{Predicates on adjacent events}} restrict the adjacency relation between events in a trend. 
For example, the predicate $(M.\mathit{rate}<\mynext(M).\mathit{rate})$ of $q_1$ requires heartbeat measurements to increase from one event to the next in a trend. 


\subsection{Granularity Selector} 
\label{sec:granularity-selector}

The number of event trends matched by a pattern $P$ is determined by the presence of Kleene plus in $P$ and the semantics under which $P$ is evaluated~\cite{QCRR14, ZDI14} (Table~\ref{tab:complexity}). The number of trends matched by $P$ ranges from linear to exponential in the number of matched events.

State-of-the-art two-step approaches~\cite{oracle, DGPRSW07, esper, flink, ZDI14} first construct \textit{all event trends} (Figure~\ref{fig:ess}) and then compute their aggregation. These approaches do not offer a feasible real-time solution, since they suffer from exponential overhead of event trend construction in the worst case (Table~\ref{tab:complexity}). 

\begin{table}[t]
\centering
\begin{tabular}{|c||c|c|}
\hline
\textbf{Event matching}
& \textbf{Event sequence} 
& \textbf{Kleene} 
\\
\textbf{semantics}
& \textbf{pattern}
& \textbf{pattern} 
\\
\hline\hline
\anysem
& Polynomial
& Exponential \\
\hline
\nextsem, \contsem
& Linear
& Polynomial \\
\hline
\end{tabular}
\vspace*{1mm}
\caption{Number of trends in the number of events}
\label{tab:complexity}
\end{table}

To overcome these limitations, we propose to \textit{omit the trend construction step} and \textit{\textbf{incrementally}} compute trend aggregation. All events are discarded once they have been used to update aggregates.
Better yet, depending on the semantics of a query $q$, our granularity selector chooses the \textit{\textbf{coarsest granularity}} at which trend aggregates are maintained by $q$ such that both \textit{correctness} and \textit{optimal time complexity} of trend aggregation are guaranteed for $q$. 
More precisely, our granularity selector decides whether to maintain trend aggregates for $q$ at pattern, type, or mixed granularity as follows (Table~\ref{tab:granularity-selection}).

\begin{table}[h]
\centering
\begin{tabular}{|c||c|c|}
\hline
\textbf{Event matching}
& \multicolumn{2}{c|}{\textbf{Predicates on adjacent events}} 
\\
\textbf{semantics}
& \multicolumn{1}{c|}{\ \ \ \ \ \textbf{without}\ \ \ \ \ }
& \multicolumn{1}{c|}{\textbf{with}} \\
\hline\hline
\anysem
& \multicolumn{1}{c|}{Type}
& \multicolumn{1}{c|}{Mixed} \\
\hline
\nextsem,\;\contsem
& \multicolumn{2}{c|}{Pattern} \\
\hline
\end{tabular}
\vspace*{1mm}
\caption{Granularity selection}
\label{tab:granularity-selection}
\end{table}

\textbf{\textit{Type-grained aggregator}}.
If the query $q$ is evaluated under the skip-till-any-match semantics and has no predicates on adjacent events, our executor maintains an aggregate per each event type in the pattern
(Section~\ref{sec:type-grained-aggregator}).

\textbf{\textit{Mixed-grained aggregator}}.
If $q$ is evaluated under the skip-till-any-match semantics and has predicates on adjacent events $\theta$, our executor maintains the aggregates at mixed granularities, namely, either per event $e$ if $e$ is required to evaluate the predicates $\theta$ or per event type \textit{e.type} otherwise 
(Section~\ref{sec:mixed-grained-aggregator}).

\textbf{\textit{Pattern-grained aggregator}}.
If the query $q$ is evaluated under the contiguous or skip-till-next-match semantics, our executor adopts the pattern-grained aggregation strategy. Namely, only the final aggregate of $q$ and the intermediate aggregate of the last matched event are kept
(Section~\ref{sec:pattern-grained-aggregator}).

\section{Type-Grained Aggregator}
\label{sec:type-grained-aggregator}

To overcome \textit{exponential time overhead} of trend construction under skip-till-any-match  (Table~\ref{tab:complexity}), we now propose to \textit{incrementally} compute trend aggregation at the \textit{event type granularity}.
Coarse-grained trend aggregation under skip-till-any-match is complicated by the flexibility of this semantics, especially, by its ability to skip \textit{any} event in between adjacent events in a trend (Definition~\ref{def:event-trend-under-STAM}). Thus, \textit{any} previously matched event $e_p$ may be a predecessor event of a new event $e$ in a trend (Definition~\ref{def:adjacent_events}). In particular, each previously matched event $e_p$ must be kept in order to evaluate predicates on adjacent events while deciding whether $e_p$ and a new event $e$ are adjacent. This requirement contradicts our goal of incrementally computing trend aggregation and immediately discarding all events. 

However, if the query has \textit{no predicates on adjacent events}, we now postulate that incremental trend aggregation is possible at the type granularity as follows. 
Windows, predicates on single events, and grouping partition the stream into sub-streams (Section~\ref{sec:other}). Let $e$ be an event of type $E$ within a sub-stream $S$. When $e$ arrives, \textit{all} previously matched events of predecessor types of $E$ are adjacent to $e$ within the sub-stream $S$ (Definition~\ref{def:adjacent_events}). Thus, a count can be assigned to each type in the pattern $P$. The event $e$ updates the count of $E$ and is discarded thereafter. The final count corresponds to the count of the end type of $P$. 

\vspace{-2mm}
\begin{table}[h]
\centering
\begin{tabular}{|c||c|c|c|}
\hline
Event $e$
& $e.count$
& $A.count$
& $B.count$ \\
\hline
\hline
a1 & 1 & 1 & \\ \hline
b2 & 1 & & 1 \\ \hline
a3 & 3 & 4 & \\ \hline 
a4 & 6 & 10 & \\ \hline 
b6 & 10 & & 11 \\ \hline 
a7 & 22 & 32 & \\ \hline 
b8 & 32 &  & 43 \\ \hline 
\end{tabular}
\vspace{2mm}
\caption{Type-grained trend count}
\label{tab:type}
\end{table}
\vspace{-6mm}

\begin{example}
Continuing our running example in Figure~\ref{fig:ess}, the type-grained trend count computation is shown in Table~\ref{tab:type}. 
For example, when an event $a7$ is matched, the intermediate count $a7.count$ captures the number of partial trends that end at $a7$. According to our predecessor relationship analysis in Section~\ref{sec:pattern-analyzer}, all previously matched $a$'s and $b$'s are adjacent to $a7$. Thus, $a7.count$ is set to the sum of the counts of all previously matched $a$'s and $b$'s to accumulate the number of trends extended by $a7$. We further increment $a7.count$ by one since $a7$ is of a start type of the pattern $P$ and thus begins one new trend (Section~\ref{sec:pattern-analyzer}). 

$A.count$ accumulates the number of partial trends that end at an $a$. Thus, each $a$ increments $A.count$ by $a.count$. 

$B.count$ is computed analogously to $A.count$. Since a $b$ is of an end type of the pattern $P$, a $b$ finishes all trends it extends. Thus, $B.count$ corresponds to the final count. 43 trends are detected (Figure~\ref{fig:ess}). 
\[ 
\begin{array}{ll}
a7.count &= A.count+B.count+1 = 10+11+1 = 22. \\
A.count &= A.count+a7.count = 10+22 = 32. \\
b8.count &= A.count = 32. \\
B.count &= B.count+b8.count = 11+32 = 43.
\end{array}
\]

Only updates are shown in Table~\ref{tab:type} for readability. Only the most recent values of the type-grained aggregates $A.count$ and $B.count$ are stored at a time.
\label{ex:type}
\end{example}

\begin{theorem}[\textbf{\textit{Type-Grained Trend Count}}]
Let $q$ be a query that is evaluated under skip-till-any-match and has no predicates on adjacent events, $P$ be its pattern, and $e \in I$ be an event of type $E$.
Then the intermediate count $e.count$ corresponds to the number of (partial) trends that end at the event $e$. 
The type-grained trend count $E.count$ captures the number of (partial) trends that end at an event of type $E$.
The final count is the number of finished trends matched by $q$.
\[
\begin{array}{ll}
e.\mathit{count} &= \sum_{E' \in P.\mathit{predTypes}(E)} E'.\mathit{count}.\\
E.\mathit{count} &= \sum_\mathit{e.type=E} e.\mathit{count}.\\
\mathit{final}\_\mathit{count} &= end(P).\mathit{count}.
\end{array}
\]
If $E=start(P)$, $e.count$ is incremented by one.
\label{theo:correctness-type}
\end{theorem}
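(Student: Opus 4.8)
The plan is to prove the first identity --- that $e.\mathit{count}$ equals the number of (partial) trends ending at $e$ --- by induction on the events of the sub-stream in increasing time-stamp order, and then to read off the statements about $E.\mathit{count}$ and the final count as corollaries. To make the combinatorics precise I would identify a ``(partial) trend ending at $e$'' with a run of the pattern's FSA (Section~\ref{sec:pattern-analyzer}) that consumes a subsequence of the sub-stream whose last event is $e$; an accepting run, i.e.\ one ending in the state labelled $end(P)$, is a \emph{finished} trend. The single fact I rely on throughout is that, under skip-till-any-match with no predicates on adjacent events, whether a previously matched event $e_p$ is adjacent to a new event $e$ of type $E$ depends only on $e_p.\mathit{type}$ lying in $P.\mathit{predTypes}(E)$ and on $e_p.\mathit{time}<e.\mathit{time}$ (Definition~\ref{def:adjacent_events}); it never depends on the concrete attributes of $e_p$. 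This type-and-time-only adjacency is exactly what licenses counting at type granularity.

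First I would establish the combinatorial decomposition. Fix a new event $e$ of type $E$ and let $T_e$ be the set of runs ending at $e$. I claim that the operation ``delete the last event'' is a bijection from the members of $T_e$ of length at least two onto $\bigsqcup_{E'\in P.\mathit{predTypes}(E)}\ \bigsqcup_{e_p:\,e_p.\mathit{type}=E',\ e_p.\mathit{time}<e.\mathit{time}} T_{e_p}$. It is well defined because deleting $e$ leaves a shorter run whose last event $e_p$ must be connected to $E$ by an FSA transition, i.e.\ $e_p.\mathit{type}\in P.\mathit{predTypes}(E)$. Its inverse is ``append $e$'': for \emph{any} run $tr'$ ending at such an $e_p$, the extension $(tr',e)$ is again a legal run, since under this semantics the only condition to check is adjacency of $e_p$ and $e$, which holds automatically by the type-and-time-only criterion. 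The singleton run $(e)$ belongs to $T_e$ precisely when $E=start(P)$, which accounts for the ``$+1$''.

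With the decomposition in hand the induction is routine. In the base case the predecessor union is empty, so $|T_e|$ is $1$ when $E=start(P)$ and $0$ otherwise, matching the formula. In the inductive step the bijection together with the hypothesis $|T_{e_p}|=e_p.\mathit{count}$ for all earlier $e_p$ gives $|T_e| = [\,E=start(P)\,] + \sum_{E'\in P.\mathit{predTypes}(E)}\ \sum_{e_p:\,e_p.\mathit{type}=E',\ e_p.\mathit{time}<e.\mathit{time}} e_p.\mathit{count}$. The key observation is that, because events are consumed in time order, the inner sum is exactly the running value of $E'.\mathit{count}$ at the moment $e$ is processed --- this is what the update $E'.\mathit{count}=\sum_{e.\mathit{type}=E'} e.\mathit{count}$ accumulates. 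Substituting collapses the double sum into $\sum_{E'\in P.\mathit{predTypes}(E)} E'.\mathit{count}$, yielding the stated recurrence for $e.\mathit{count}$.

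The remaining claims then follow quickly. Every run ends at a unique event, so summing $|T_e|$ over all $e$ of type $E$ counts each run ending at type $E$ once; this gives $E.\mathit{count}=\sum_{e.\mathit{type}=E} e.\mathit{count}$ and its stated meaning. A run is finished iff it is accepting iff its last event has type $end(P)$, so the finished trends are the disjoint union of the $T_e$ over events $e$ with $e.\mathit{type}=end(P)$, whence their number is $end(P).\mathit{count}=\mathit{final}\_\mathit{count}$. I expect the main obstacle to be the verification that ``append $e$'' never fails --- that no run ending at a predecessor-type event is ever barred from being extended by $e$. This is the one place where both hypotheses are indispensable: with adjacency predicates, or under skip-till-next-match or the contiguous semantics, extendability would depend on the individual event $e_p$ (or on the events skipped between $e_p$ and $e$), the bijection would break, and type-level counts would over-count. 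Pinning down precisely that the type-and-time-only criterion makes the inverse map total is the crux of the argument.
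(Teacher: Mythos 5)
Your proposal is correct and follows essentially the same route as the paper: induction over events in arrival order, with the crux being that under skip-till-any-match without adjacency predicates a new event $e$ is adjacent to \emph{every} previously matched event of a predecessor type, so the per-event count is the sum of the predecessor types' running counts (plus one for the start type). You merely make explicit, via the delete-last-event/append-$e$ bijection, what the paper states more informally as ``$e$ continues all these trends.''
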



\begin{proof}
By Definition~\ref{def:adjacent_events}, event adjacency is determined by windows, grouping, predicates, event types, and event time stamps. Windows, grouping, and predicates on single events partition the event stream such that trend aggregation is computed for each partition separately as defined in Section~\ref{sec:other}. We now prove the statement by induction on the number of matched events $n$ for the remaining query constraints.

\textit{Induction Basis}: $n=1$. If only one event $e$ of type $E$ is matched, then there are no trends to extend and $e$ starts a new trend. Thus, $e.count=1$ and $E.count=1$. If $E$ is the end type of $P$, $\mathit{final}\_\mathit{count}=1$. Otherwise, $\mathit{final}\_\mathit{count}=0$.

\textit{Induction Assumption}: This statement holds for $n$ events.

\textit{Induction Step}: $n \rightarrow n+1$. 
According to the induction assumption, for a type $E'$ in $P$, $E'.count$ corresponds to the number of (partial) trends that end at an event of type $E'$ after $n$ events have been processed. Let $e$ be a new event of type $E$. By Definition~\ref{def:event-trend-under-STAM} without predicates on adjacent events $e$ is adjacent to all previously matched events of type $E' \in P.\mathit{predTypes}(E)$, i.e., $e$ continues all these trends. To accumulate the number of trends extended by $e$, $e.count$ corresponds to the sum of counts of all predecessor types of $E$, i.e., $E'.count$. In addition, if $E$ is a start type of $P$, $e$ begins a new trend and $e.count$ is incremented by 1. 
$E.count$ is increased by $e.count$ to accumulate the number of trends extended by $e$.
Lastly, the query $q$ counts the number of finished trends only. By Definition~\ref{def:finished-partial-trend} and Section~\ref{sec:pattern-analyzer}, only events of end type of $P$ may finish trends. Thus, the count of the end type of $P$ captures the number of finished trends.
\end{proof}

\begin{algorithm}[t]
\caption{Type-grained trend count algorithm}
\label{algo:type}
\begin{algorithmic}[1]
\Require Query $q$ with pattern $P$, event stream $I$
\Ensure Count of event trends matched by $q$ in $I$
\State $H \leftarrow \text{empty hash table}$
\ForAll {event type $E$ in $P$} $H.put(E,0)$
\EndFor
\ForAll {$e \in I$ of type $E$} 
	\State $e.\mathit{count} \leftarrow (E = \mathit{start}(P))\ ?\ 1\ :\ 0$
	\ForAll {$E' \in P.\mathit{predTypes}(E)$}  
		\State $\switch e.\mathit{count} += H.get(E')$ 	 		
  	\EndFor   			
  	\State $E.count \leftarrow H.get(E) + e.count$
    \State $H.put(E, E.count)$  		
\EndFor
\State\Return $H.get(end(P))$
\end{algorithmic}
\end{algorithm}

\textbf{Type-Grained Trend Count Algorithm} maintains a hash table $H$ that maps each type $E$ in the pattern $P$ to the count for $E$. Initially, all counts are set to 0 (Lines~1--2 in Algorithm~\ref{algo:type}).
For each event $e$ of type $E$, $e.count=1$ if $E$ is a start type of $P$. Otherwise, $e.count=0$ (Lines~3--4). For each predecessor type $E'$ of $E$, $e.count$ is incremented by $E'.count$ (Lines~5--6). $E.count$ is incremented by $e.count$ (Lines~7--8). The count of the end type of $P$ is returned (Line~9).

\begin{theorem}[\textbf{\textit{Complexity}}]
Let $q$ be a query evaluated under skip-till-any-match, $P$ be its pattern of length $l$, and $n$ be the number of events per window of $q$. Algorithm~\ref{algo:type} has linear time $O(nl)$ and space $\Theta(l)$ complexity.
\label{theo:comlexity-type}
\end{theorem}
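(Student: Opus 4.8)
The plan is to analyze Algorithm~\ref{algo:type} line by line, bounding the work done during initialization and during the processing of each event, and then summing these contributions over the whole window. The key observation I would exploit is that the number of event types in $P$ equals the pattern length $l$ (by Definition~\ref{def:pattern}, since each type appears at most once), so the hash table $H$ has exactly $l$ entries. I would assume, as is standard for these algorithms, that hash-table operations $H.put$ and $H.get$ take amortized $O(1)$ time and that the predecessor-type sets $P.\mathit{predTypes}(E)$ are precomputed by the static pattern analyzer (Section~\ref{sec:pattern-analyzer}).

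\textbf{Space bound.} First I would argue the space bound $\Theta(l)$. The only persistent data structure is the hash table $H$, which stores one integer count per event type in $P$, giving exactly $l$ entries; this is the dominant term and it is present regardless of the input, so the bound is tight, i.e.\ $\Theta(l)$ and not merely $O(l)$. Crucially, I would emphasize that the algorithm keeps only the current values $E.count$ and reuses the scratch variable $e.count$ for each event (discarding $e$ immediately), so the space does \emph{not} grow with $n$ — this is precisely the payoff of the type-grained (rather than event-grained) approach advertised in the overview.

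\textbf{Time bound.} For the time bound, I would split the analysis into the initialization loop (Lines~1--2), which performs $l$ insertions at $O(1)$ each for $O(l)$ total, and the main loop (Lines~3--8), which iterates once per matched event. Within one iteration for an event $e$ of type $E$, the inner loop (Lines~5--6) runs $|P.\mathit{predTypes}(E)|$ times, each iteration doing one $H.get$ and one addition in $O(1)$. The outer-loop body outside the inner loop (Lines~4, 7--8) is $O(1)$. So one event costs $O(|P.\mathit{predTypes}(E)|) = O(l)$, and summing over all $n$ events in the window gives $O(nl)$; adding the $O(l)$ initialization (absorbed since $l \le nl$ whenever $n\ge 1$) yields the claimed $O(nl)$ total.

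\textbf{Main obstacle.} I expect the subtle point to be justifying that $|P.\mathit{predTypes}(E)| \le l$ uniformly and that this bound cannot be tightened in general, so that $O(nl)$ is the honest per-window complexity rather than something dependent on the automaton's fan-in. Under skip-till-any-match, the FSA in Figure~\ref{fig:template} shows a type can have predecessors including itself and the end type, so $\mathit{predTypes}$ can be as large as $l$; hence the $\sum_E |P.\mathit{predTypes}(E)|$ factor per event is genuinely $\Theta(l)$ in the worst case and the product form $O(nl)$ is the correct statement. The remaining care is merely to state the hash-table and precomputation assumptions explicitly so the $O(1)$ per-operation claim is unambiguous; the rest is a routine summation.
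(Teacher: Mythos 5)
Your proof is correct and follows essentially the same reasoning as the paper: each event accesses the counts of its at most $l$ predecessor types, giving $O(nl)$ time, and the hash table stores exactly one count per type, giving $\Theta(l)$ space. Your version just spells out the hash-table and precomputation assumptions more explicitly than the paper does.
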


\begin{proof}
For each matched event of type $E$, the type-grained trend counts of all predecessor types of $E$ are accessed. Thus, the time complexity of Algorithm~\ref{algo:type} is linear: $O(nl)$.
Its space complexity is determined by the number of counts. Since one count is stored per type, the space costs are linear: $\Theta(l)$.
\end{proof}

\begin{theorem}[\textbf{\textit{Time Optimality}}]
The linear time complexity $O(nl)$ of Algorithm~\ref{algo:type} is optimal.
\label{theo:optimality-type}
\end{theorem}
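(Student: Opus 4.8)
The plan is to match the $O(nl)$ upper bound established in Theorem~\ref{theo:comlexity-type} with a worst-case lower bound of $\Omega(nl)$, so that no algorithm computing the trend count under \anysem\ can be asymptotically faster. I would phrase this as an adversary argument over pairs $(\text{stream},\text{pattern})$, working in the same elementary-operation model in which the upper bound is stated (counting stream reads and count-combining operations): I exhibit an input family on which any correct algorithm is forced to perform $\Omega(nl)$ operations, and then invoke Theorem~\ref{theo:comlexity-type} to conclude tightness.

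First I would establish the trivial factor $n$. I claim every correct algorithm must inspect each of the $n$ matched events in the window at least once. Suppose some algorithm never reads a particular event $e$; an adversary can replace $e$ by an event of a different, still-matching type without the algorithm noticing. By Theorem~\ref{theo:correctness-type} this replacement changes the predecessor contributions and hence the final count, so the algorithm returns the same answer on two inputs with different counts and is therefore wrong on one of them. Thus $\Omega(n)$ reads are unavoidable.

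Next I would lift this to the product $\Omega(nl)$ by exhibiting a pattern family of length $l$ whose FSA is predecessor-dense, i.e., in which the relevant types have $\Theta(l)$ predecessor types. On such a pattern the count of a newly matched event of type $E$ is, by Theorem~\ref{theo:correctness-type}, the sum $\sum_{E' \in P.\mathit{predTypes}(E)} E'.\mathit{count}$ of $\Theta(l)$ distinct type-counts. The crux is to choose the stream so that these $\Theta(l)$ type-counts are mutually independent: the adversary sets each of them to an arbitrary value by an appropriate stream prefix, so that changing any single one changes $e.\mathit{count}$ and ultimately the reported aggregate. An algorithm that examined only $o(l)$ of the predecessor counts for some event could be fooled on an unexamined coordinate, so it must touch all $\Theta(l)$ of them. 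Summing over the $n$ events yields $\Omega(nl)$, which together with Theorem~\ref{theo:comlexity-type} proves optimality.

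The main obstacle I anticipate is precisely the step that upgrades the trivial $\Omega(n)$ bound to the product $\Omega(nl)$ rather than the sum $\Omega(n+l)$: I must rule out algorithms that amortize or precompute the per-event predecessor sums across events. I would handle this by making the $\Theta(l)$ relevant type-counts genuinely vary from event to event in the adversarial stream, so that no single precomputation can serve more than $O(1)$ events; equivalently, I would design the reduction so that the $n$ per-event subproblems are information-theoretically independent. A secondary subtlety is confirming that a predecessor-dense pattern of length $l$ exists within the model restrictions of Section~\ref{sec:basic} (each type appearing at most once, no negation or disjunction); a suitably nested Kleene pattern whose outer closures induce many back-edges into a common type should furnish the required $\Theta(l)$ in-degree, and I would verify this explicitly via the FSA translation of Section~\ref{sec:pattern-analyzer} before finalizing the construction.
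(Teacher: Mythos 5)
Your overall route is the same one the paper takes---pair the $O(nl)$ upper bound of Theorem~\ref{theo:comlexity-type} with a lower bound of the form ``$n$ forced event inspections $\times$ $\Theta(l)$ forced predecessor-count accesses each''---but you try to make rigorous what the paper only asserts in two sentences. Your $\Omega(n)$ adversary step is fine, and your secondary worry about pattern existence is resolvable: nesting of the form $Q_1=A_1$, $Q_{i+1}=\seq(Q_i,A_{i+1})+$ gives $\mathit{start}(Q_{i+1})=A_1$ and $\mathit{end}(Q_{i+1})=A_{i+1}$, so each Kleene plus adds a back-transition $A_{i+1}\to A_1$ and $A_1$ acquires in-degree $l-1$ under the restrictions of Section~\ref{sec:basic}.

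The genuine gap is exactly the step you flag and defer: upgrading $\Omega(n)$ to the product $\Omega(nl)$, and your proposed fix does not close it. The FSA of a pattern of length $l$ has only $O(l)$ transitions in total (one per \seq\ or $+$ node of the parse tree), so $\sum_E |P.\mathit{predTypes}(E)|=O(l)$ and each arriving event changes the count of only one type. Consequently an algorithm that maintains, for each type $E$, the running sum $S_E=\sum_{E'\in P.\mathit{predTypes}(E)}E'.\mathit{count}$ and propagates each count update to the out-neighbors of the updated type answers every event in $O(1+\mathit{outdeg})$ time; on the predecessor-dense pattern above every out-degree is $O(1)$, so this algorithm runs in $O(n)$ total and defeats your hard instance. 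Your remedy---make all $\Theta(l)$ predecessor counts ``genuinely vary from event to event''---cannot be carried out within the budget: forcing all $\Theta(l)$ predecessor counts of $E^*$ to change between consecutive $E^*$-events requires $\Theta(l)$ intervening events, which caps the number of $E^*$-events at $O(n/l)$ and yields only $\Omega(n)$ forced work; if fewer counts change, the incremental-sum algorithm amortizes the rest away. So the unrestricted information-theoretic lower bound you aim for appears not to hold. The paper's own proof silently avoids this by arguing only within the fixed strategy of Algorithm~\ref{algo:type} (``one count is maintained per type'' and each event ``accesses $O(l)$ type-grained trend counts''), i.e., it establishes optimality only relative to a pull-based model in which each matched event must explicitly read every predecessor type's current count; if you want a defensible statement, you should either restrict to that model explicitly or weaken the claimed lower bound.
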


\begin{proof}
Any trend aggregation algorithm must process $n$ events to ensure correctness of the final count.
By Definition~\ref{def:adjacent_events}, event adjacency is determined by event types. In the worst case, all types in $P$ are predecessor types of a type $E$.
%
%
There are $\Theta(l)$ types in $P$. One count is maintained per type. Each matched event of type $E$ triggers an update of $E.count$ that accesses $O(l)$ type-grained trend counts. Thus, the linear time complexity $O(nl)$ is optimal.
\end{proof}

\section{Mixed-Grained Aggregator}
\label{sec:mixed-grained-aggregator}

We now extend our coarse-grained trend aggregation techniques to the \textit{most general class of queries under skip-till-any-match with predicates on adjacent events} $\theta$, for example, query $q_3$ in Section~\ref{sec:introduction}. 
To this end, we propose to maintain aggregates at mixed granularities. 
Namely, we classify the event types in a pattern $P$ into two disjoint sets $\mathcal{T}_e$ and $\mathcal{T}_t$.
Events of types $\mathcal{T}_e$ must be stored to evaluate the predicates $\theta$ as new events arrive. Thus, an \textit{event-grained trend aggregate} is computed for each event of type in $\mathcal{T}_e$.
In contrast, events of types $\mathcal{T}_t$ do not have to be kept. Thus, a \textit{type-grained trend aggregate} is maintained for each type in $\mathcal{T}_t$. 
Only in the extreme case when expressive predicates restrict all events (i.e., $\mathcal{T}_t = \emptyset$), fine-grained trend aggregation is required~\cite{PLRM18}.

\vspace{-2mm}
\begin{table}[h]
\centering
\begin{tabular}{|c||c|c|c|}
\hline
Event $e$
& $e.count$
& $A.count$
& $\mathit{final}\_count$ \\
\hline
\hline
a1 & 1 & 1 & \\ \hline
b2 & 1 & & 1 \\ \hline
a3 & 3 & 4 & \\ \hline 
a4 & 6 & 10 & \\ \hline 
b6 & 10 & & 11 \\ \hline 
a7 & 12 & 22 & \\ \hline 
b8 & 22 &  & 33 \\ \hline 
\end{tabular}
\vspace{2mm}
\caption{Mixed-grained trend count: Type-grained trend count for $A$ and event-grained trend counts for $b$'s}
\label{tab:mixed}
\end{table}
\vspace{-6mm}

\begin{example}
Continuing our example in Figure~\ref{fig:ess}, assume that the predicates $\theta$ restrict the adjacency relations between $b$'s and $a$'s. When an $a$ arrives, we have to compare it to \textit{each} previously matched $b$ to select those $b$'s that satisfy the predicates $\theta$. Thus, \textit{event-grained trend counts} must be maintained for $b$'s (Table~\ref{tab:mixed}).
In contrast, all $a$'s are adjacent to following $b$'s. Thus, a \textit{type-grained trend count} can be maintained for type $A$. Assuming that $a7$ is adjacent to $b2$ but not to $b6$, $a7.count$ is computed based on the mixed-grained trend counts as follows:
\[
\begin{array}{ll}
a7.count 
&= A.count + \sum_{(b,a7) \text{ satisfy } \theta} b.count + 1 \\
&= A.count + b2.count + 1 = 10+1+1 = 12.
\end{array}
\]

All matched $b$'s, their counts, and the most recent values of $A.count$ and $\mathit{final}\_count$ are stored.
%
%
\label{ex:mixed}
\end{example}

\begin{theorem}[\textbf{\textit{Mixed-Grained Trend Count}}]
Let query $q$ be evaluated under skip-till-any-match,
$\theta$ be its predicates on adjacent events,
$P$ be its pattern,
\textit{attr} and $\mathit{attr}_x$ be attributes of types $E$ and $E_x$ in $P$ respectively, and 
$\circ \in \{>, \geq, <,$ $\leq, =, \neq\}$ be a comparison operator.
A type-grained trend count is maintained for a type $E$ if 
either there is no predicate of the form $(E.\mathit{attr} \circ E_x.\mathit{attr}_x)$ in $\theta$ or $E \notin P.\mathit{predTypes}(E_x)$. 
Otherwise, an event-grained trend count is computed for each matched event of type $E$.

Let $\mathcal{T}_t$ ($\mathcal{T}_e$) be the set of event types in $P$ for which type-grained (event-grained) trend counts are maintained.
Let $e \in I$ be an event of type $E$.
A mixed-grained trend count is computed as follows:
\[
\begin{array}{ll}
e.count = & \sum_{E' \in \mathcal{T}_t \cap P.\mathit{predTypes}(E)} E'.count +\\
& \sum_{e_p.\mathit{type} \in \mathcal{T}_e \cap P.\mathit{predTypes}(E),\;(e_p,e) \text{ satisfy } \theta} e_p.count.
\end{array}
\]
If $E=start(P)$, $e.count$ is incremented by one. $E.count$ and $final\_count$ are computed as defined in Theorem~\ref{theo:correctness-type}.
\label{theo:correctness-mixed}
\end{theorem}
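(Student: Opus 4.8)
The plan is to prove the statement by induction on the number of matched events $n$, mirroring the proof of Theorem~\ref{theo:correctness-type} and reducing the correctness of $E.count$ and $final\_count$ to that result once the formula for $e.count$ is established. As in that proof, I would first invoke Section~\ref{sec:other} to argue that windows, grouping, and predicates on single events partition the stream into independent sub-streams, so it suffices to reason within a single partition using only the predecessor-type relation, the time order, and the adjacency predicates $\theta$.

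The technical heart of the argument is a lemma characterizing when type-grained counting is sound. I would show: for every predecessor type $E' \in \mathcal{T}_t \cap P.\mathit{predTypes}(E)$, every previously matched event $e_p$ of type $E'$ is adjacent to the new event $e$. This follows from the classification of $\mathcal{T}_t$: since $E' \in \mathcal{T}_t$ and $E' \in P.\mathit{predTypes}(E)$, instantiating the classification condition at $E_x = E$ forces the first disjunct, i.e.\ $\theta$ contains no predicate of the form $(E'.\mathit{attr} \circ E.\mathit{attr}_x)$. Hence the restriction of $\theta$ to the pair $(E',E)$ is vacuous, so by Definition~\ref{def:adjacent_events} every such $e_p$ is adjacent to $e$ (the time order holds automatically since $e$ is newest). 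Dually, for $E' \in \mathcal{T}_e \cap P.\mathit{predTypes}(E)$ the adjacency of a particular $e_p$ genuinely depends on $\theta$, which is exactly why each such $e_p$ must be retained and tested individually. I would also verify that $\mathcal{T}_t$ and $\mathcal{T}_e$ partition the types, so that the two sums in the formula range over disjoint and jointly exhaustive sets of predecessor events, guaranteeing that no partial trend is missed or double-counted.

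With the lemma in hand, the induction proceeds as in Theorem~\ref{theo:correctness-type}. The basis $n=1$ is identical. For the step, when $e$ of type $E$ arrives, the set of partial trends it extends is precisely those ending at some predecessor event $e_p$ adjacent to $e$. By the lemma these are all trends ending at a type-grained predecessor type $E'$ (contributing $E'.count$ by the induction hypothesis, since all of them are adjacent to $e$) together with those trends ending at an event-grained predecessor event $e_p$ with $(e_p,e)$ satisfying $\theta$ (contributing $e_p.count$). Summing these, and adding one when $E=start(P)$ because $e$ begins a fresh trend, yields exactly the claimed expression for $e.count$. The derivations of $E.count$ and $final\_count$ are then word-for-word those of Theorem~\ref{theo:correctness-type}.

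I expect the main obstacle to be the lemma itself: a careful reading of the classification condition to confirm that $E' \in \mathcal{T}_t \cap P.\mathit{predTypes}(E)$ really does force the predicate restriction on $(E',E)$ to be trivial, and conversely that no type mistakenly placed in $\mathcal{T}_t$ could conceal a discriminating comparison through a multi-attribute or transitively related predicate. Checking that the two sums are exactly disjoint and exhaustive over the adjacent predecessor events, so that correctness transfers cleanly from the type-grained case, is the remaining point that requires care.
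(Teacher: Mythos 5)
Your proposal is correct and follows exactly the route the paper indicates: the paper omits the detailed argument, stating only that Theorem~\ref{theo:correctness-mixed} ``can be proven by induction, similarly to Theorem~\ref{theo:correctness-type}.'' Your write-up fleshes out that sketch faithfully, and the soundness lemma you isolate (that $E' \in \mathcal{T}_t \cap P.\mathit{predTypes}(E)$ forces the restriction of $\theta$ to the pair $(E',E)$ to be vacuous, so all events of type $E'$ are adjacent to $e$) is precisely the point that justifies replacing per-event counts by a single type-grained count.
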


Theorem~\ref{theo:correctness-mixed} can be proven by induction, similarly to Theorem~\ref{theo:correctness-type}. 


\begin{algorithm}[t]
\caption{Mixed-grained trend count algorithm}
\label{algo:mixed}
\begin{algorithmic}[1]
\Require Query $q$ with pattern $P$ and predicates $\theta$, stream $I$
\Ensure Count of event trends matched by $q$ in $I$
\State $H \leftarrow \text{empty hash table};\ V \leftarrow \emptyset;\ \mathit{final}\_\mathit{count} \leftarrow 0$
\ForAll {event type $E$ in $P$} 
	$H.put(E,0)$
\EndFor
\ForAll {$(E.\mathit{attr}\ Op\ E_x.\mathit{attr}_x) \in \theta$}
	\If {$E \in P.\mathit{predTypes}(E_x)$} 
    	$H.\mathit{remove}(E)$ 
    \EndIf
\EndFor
\ForAll {$e \in I$ of type $E$} 
	\State $e.\mathit{count} \leftarrow (E = \mathit{start}(P))\ ?\ 1\ :\ 0$
	\ForAll {$E' \in P.\mathit{predTypes}(E)$}  
    	\If {$E' \in H$} 
        	$\switch e.count += H.get(E')$ 	 		
        \Else \textbf{ for each } $e_p\in V.\mathit{predEvents}(e)$ of type $E'$
       			\State\ $|$ \ \ \ \textbf{do } $V \leftarrow V \cup e;\ \switch e.count += e_p.count$              
        \EndIf
  	\EndFor   	
    \If {$E \in H$} 
    	\State $E.count \leftarrow H.get(E) + e.count$
        \State $H.put(E, E.count)$ 		
    \Else \textbf{ if } $E = end(P)$ \textbf{ then } $\switch \mathit{final}\_\mathit{count} += e.count$ 
    \EndIf
\EndFor
\If {$end(P) \in H$} 
	\Return $H.get(end(P))$
	\Else\ \Return $\mathit{final}\_\mathit{count}$
\EndIf
\end{algorithmic}
\end{algorithm} 

\textbf{Mixed-Grained Trend Count Algorithm}.
At compile time, for each type $E$ in the pattern $P$, Algorithm~\ref{algo:mixed} decides whether to maintain a single type-grained trend count $E.count$ or an event-grained trend count for each matched event of type $E$. Type-grained trend counts are maintained in a hash table $H$. Initially, the table $H$ contains all types $P$ with counts set to 0 (Lines~1--2). For each predicate that restricts the adjacency relation between events of type $E$ and events of type $E_x$, if $E$ is a predecessor type of $E_x$, then $E$ is removed from the table $H$ since event-grained trend counts must be computed for events of type $E$ (Lines~3--4). 

At runtime, for each event $e$ of type $E$, $e.count=1$ if $E$ is a start type of $P$. Otherwise, $e.count=0$ (Lines~5--6). For each predecessor type $E'$ of $E$, if a type-grained trend count $E'.count$ is maintained, $e.count$ is incremented by $E'.count$ (Lines~7--8). If event-grained trend counts for events of type $E'$ are computed, $e.count$ is incremented by the count of each predecessor event $e'$ of type $E'$ (Lines~9--10).
If a type-grained trend count $E.count$ is maintained, $E.count$ is incremented by $e.count$ in the table $H$ (Lines~11--13). If event-grained trend counts for events of type $E$ are computed and $E$ is an end type of $P$, then the final count is incremented by $e.count$ (Line~14).
Lastly, if a type-grained trend count is maintained for the end type of $P$, it is returned as a result. Otherwise, $final\_count$ is returned (Lines~15--16).

\begin{theorem}[\textbf{\textit{Complexity}}]
Let $q$ be a query evaluated under skip-till-any-match,
$P$ be its pattern of length $l$, and
$n$ be the number of events per window of $q$.
Let $\mathcal{T}_t$ ($\mathcal{T}_e$) be the set of event types for which type-grained (event-grained) trend counts are maintained.
Let $t \leq l$ be the number of types in $\mathcal{T}_t$ and 
$n_e \leq n$ be the number of events of a type in $\mathcal{T}_e$ per window of $q$.
Algorithm~\ref{algo:mixed} has quadratic time $O(n(t+n_e))$ and linear space $\Theta(t + n_e)$ complexity.
\label{theo:comlexity-mixed}
\end{theorem}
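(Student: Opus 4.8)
The plan is to charge the cost of Algorithm~\ref{algo:mixed} separately against its one-time compile-time phase (Lines~1--4) and its per-event runtime phase (Lines~5--14), then combine the two. First I would dispatch the setup: Lines~1--2 insert one entry per event type of $P$ into the hash table $H$, and Lines~3--4 scan the predicates of $\theta$ once, removing at most one type per predicate of the form $(E.\mathit{attr} \circ E_x.\mathit{attr}_x)$. Since $P$ has $O(l)$ types and the number of such predicates is polynomial in $l$ and independent of $n$, this initialization is dominated by the runtime phase and contributes nothing to the asymptotic bound.

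The core of the argument is the runtime loop over the $n$ events of a window. For a fixed new event $e$ of type $E$, the inner loop (Lines~7--10) ranges over the predecessor types $E' \in P.\mathit{predTypes}(E)$, which I would partition into the type-grained ones ($E' \in H$, i.e.\ $E' \in \mathcal{T}_t$) and the event-grained ones ($E' \in \mathcal{T}_e$). Each type-grained predecessor contributes a single constant-time hash lookup (Line~8), so this part costs $O(|\mathcal{T}_t \cap P.\mathit{predTypes}(E)|) = O(t)$. Each event-grained predecessor instead triggers an iteration over the stored predecessor events in $V$ (Lines~9--10), and the aggregate number of such events touched per new event is bounded by the number of event-grained events in the window, namely $O(n_e)$. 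The remaining per-event work (Lines~6 and~11--14) is $O(1)$. Hence one event costs $O(t + n_e)$, and multiplying by the $n$ events yields the time bound $O(n(t + n_e))$. Since $n_e \leq n$ and $t \leq l$ (with $l$ a fixed query parameter), this is quadratic in $n$ in the worst case; in the degenerate case $\mathcal{T}_e = \emptyset$ it collapses to the linear bound of Theorem~\ref{theo:comlexity-type}, which is a useful consistency check.

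For the space bound I would account for the two persistent structures. After the compile-time removals, $H$ retains exactly one count per type-grained type, giving $\Theta(t)$ space, while $V$ retains each matched event of an event-grained type together with its intermediate count, giving $\Theta(n_e)$ space; the scalar $\mathit{final}\_count$ adds only $O(1)$. Because both structures are genuinely populated in the general mixed case, the total space is tightly $\Theta(t + n_e)$.

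The step I expect to need the most care is the event-grained inner loop: although Line~7 iterates over up to $O(l)$ predecessor types, I must argue that the total number of predecessor events examined in Lines~9--10 per new event is $O(n_e)$ rather than $O(l \cdot n_e)$. This rests on identifying the events iterated by $V.\mathit{predEvents}(e)$ as exactly the previously matched events of event-grained predecessor types, whose collective count in a window is what $n_e$ measures, and on confirming that each such event is visited once. Pinning down this counting is the crux; once it is settled, the remaining bookkeeping is routine and parallels the analysis of Theorem~\ref{theo:comlexity-type}.
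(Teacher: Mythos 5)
Your proposal is correct and follows essentially the same route as the paper's own proof: dismiss the compile-time phase as negligible, charge each of the $n$ events with $O(t)$ type-grained plus $O(n_e)$ event-grained count accesses to get $O(n(t+n_e))$ time, and observe that the space is determined by the number of stored trend counts, $\Theta(t+n_e)$. Your extra care about why the event-grained inner loop costs $O(n_e)$ rather than $O(l \cdot n_e)$ per event is a sound refinement of a point the paper's proof states without elaboration.
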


\begin{proof}
The time complexity of the static analysis (Lines 1--4) is linear in $l$ and the number of predicates $\theta$. These values are negligible compared to the number of events $n$ for high-rate streams and meaningful queries.
During runtime execution (Lines~5--16), for each event, $O(t)$ type-grained and $O(n_e)$ event-grained trend counts are accessed. Thus, the time complexity is quadratic: $O(n(t+n_e))$.
The space complexity is determined by the number of trend counts: $\Theta(t+n_e)$.
\end{proof}

\begin{theorem}[\textbf{\textit{Time Optimality}}]
The quadratic time complexity $O(n(t+n_e))$ of Algorithm~\ref{algo:mixed} is optimal.
\label{theo:optimality-mixed}
\end{theorem}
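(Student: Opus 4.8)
The plan is to establish a matching lower bound of $\Omega(n(t+n_e))$, mirroring the time-optimality argument for Algorithm~\ref{algo:type} (Theorem~\ref{theo:optimality-type}) but accounting separately for the type-grained and event-grained portions of the per-event work. First I would observe that any correct trend aggregation algorithm must read all $n$ events in a window, since suppressing any one event can change the final count; this already forces $\Omega(n)$ work. The heart of the proof is then to show that, in the worst case, each arriving event compels $\Omega(t+n_e)$ work, so that the product $\Omega(n(t+n_e))$ is unavoidable and meets the upper bound of Theorem~\ref{theo:comlexity-mixed}.

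For the type-grained contribution I would reuse the reasoning of Theorem~\ref{theo:optimality-type}: by Definition~\ref{def:adjacent_events}, adjacency is governed by predecessor types, and in the worst case a type $E$ has $\Theta(t)$ predecessor types in $\mathcal{T}_t$, each of whose type-grained counts must be read to assemble $e.\mathit{count}$ correctly. For the event-grained contribution, which is the crux, I would give an adversary argument. When an event $e$ of type $E$ arrives and $E$ has a predecessor type $E' \in \mathcal{T}_e$, the predicate $\theta$ selects which previously matched events of type $E'$ are adjacent to $e$; under skip-till-any-match every such event is a candidate predecessor, and $\theta$ compares the individual attribute values of $e$ against each candidate. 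I would show that an adversary can choose attribute values so that an arbitrary subset of the $n_e$ candidates satisfies $\theta$ with $e$, with each satisfying candidate carrying a positive count contribution. Because these are independent per-event decisions, any candidate the algorithm fails to inspect admits an instance in which that candidate flips the computed count; hence no summary coarser than the per-event counts suffices, and the algorithm must evaluate $\theta$ against $\Omega(n_e)$ candidates. Combining the two parts yields $\Omega(t+n_e)$ per event.

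The main obstacle will be making the event-grained lower bound rigorous, specifically ruling out a cleverer algorithm that precomputes or indexes the stored counts to answer the predicate-restricted summation in sublinear time per event. The key point to nail down is that $\theta$ is an arbitrary comparison over attributes that are only revealed as events stream in, so no index built before $e.\mathit{attr}$ is known can avoid scanning the candidate set; combined with the adversary's freedom to fix both the satisfying subset and the contributed counts, this forces a linear scan of the $n_e$ events in the worst case. Summing the per-event bound $\Omega(t+n_e)$ over all $n$ events then gives the tight lower bound $\Omega(n(t+n_e))$, establishing that the quadratic time complexity of Algorithm~\ref{algo:mixed} is optimal.
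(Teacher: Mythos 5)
Your decomposition mirrors the paper's exactly: the paper likewise splits the bound into a type-grained part inherited from Theorem~\ref{theo:optimality-type} and an event-grained part for which it asserts that, in the worst case, each event triggers an access to $O(n_e)$ event-grained counts. Where you go beyond the paper is in trying to justify that second assertion with an adversary argument, and that is precisely where your plan breaks down. The predicates $\theta$ are not arbitrary black-box tests: Theorem~\ref{theo:correctness-mixed} restricts them to comparisons $(E.\mathit{attr} \circ E_x.\mathit{attr}_x)$ with $\circ \in \{>, \geq, <, \leq, =, \neq\}$. For such a predicate, the set of stored events of a type $E' \in \mathcal{T}_e$ adjacent to a new event $e$ is cut out by a one-dimensional threshold on $\mathit{attr}_x$, so $\sum_{(e_p,e)\text{ satisfy }\theta} e_p.\mathit{count}$ is a prefix, suffix, or equality-bucket sum over the stored events ordered by $\mathit{attr}_x$. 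An order-statistic or Fenwick tree keyed on $\mathit{attr}_x$ and augmented with subtree count sums supports both insertion of a new $(\mathit{attr}_x, \mathit{count})$ pair and this query in $O(\log n_e)$ time, even though the comparison value $e.\mathit{attr}$ is revealed only when $e$ arrives, and the stored counts never change after they are first computed. So your central claim --- that no index built before $e.\mathit{attr}$ is known can avoid scanning the candidate set, hence an adversary forces $\Omega(n_e)$ work per event --- is false for the predicate language the theorem is actually about: the adversary cannot realize an arbitrary subset of the $n_e$ candidates as the satisfying set, only threshold-definable ones, and those admit sublinear aggregation.

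Consequently your plan does not establish the matching lower bound $\Omega(n(t+n_e))$; it would at best yield something like $\Omega(n(t+\log n_e))$, or it would require you to change the model explicitly (e.g., treating $\theta$ as an uninterpreted predicate oracle that must be evaluated pairwise), which is a strictly stronger assumption than the paper's syntax supports. To be fair, the paper's own proof does not close this gap either --- it simply asserts the worst-case $O(n_e)$ accesses without argument --- but a proof that advertises an adversary/indistinguishability argument as its crux must survive the indexing counter-strategy above, and yours does not.
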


\begin{proof}
By Theorem~\ref{theo:optimality-type}, the type-grained trend count computation has optimal time complexity $O(nt)$. Event-grained trend counts are maintained only for those events that are needed to verify their adjacency to future events.  In the worst case, each event triggers an access to $O(n_e)$ event-grained trend counts to guarantee correctness. In sum, the time complexity $O(n(t+n_e))$ is optimal.
\end{proof}

\section{Pattern-Grained Aggregator}
\label{sec:pattern-grained-aggregator}

To avoid \textit{polynomial time overhead} of trend construction under the skip-till-next-match and contiguous semantics (Table~\ref{tab:complexity}), we now propose to \textit{incrementally} compute trend aggregation at the coarsest possible \textit{pattern granularity}. This is possible because an event can have at most one predecessor event under these semantics. 

\begin{theorem}[\textbf{\textit{Predecessor Event Uniqueness}}]
Let $q$ be a query evaluated under the skip-till-next-match or contiguous semantics. 
For any event $e \in I$ that is part of an event trend matched by $q$, $e$ has the same predecessor event in all trends matched by $q$ (if $e$ has a predecessor).
\label{theo:predecessor-event-uniqueness}
\end{theorem}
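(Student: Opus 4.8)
The plan is to treat the two semantics separately, since the contiguous case is immediate while skip-till-next-match carries the real content. For the contiguous semantics I would argue directly from Definition~\ref{def:adjacent_events}: an event $e_p$ is adjacent to $e$ only if there is no $e'$ with $e_p.time < e'.time < e.time$. Because time stamps live in the linearly ordered set $(\mathbb{T},\leq)$, exactly one event immediately precedes $e$ in the stream, so the only candidate predecessor of $e$ is that immediately preceding event, and it is the predecessor precisely when it additionally satisfies conditions~(1)--(5). Uniqueness is then trivial and independent of which trend is considered, so $e$ has the same predecessor in every matched trend containing it.

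For skip-till-next-match I would argue by contradiction. Suppose $e$ occurs in two matched trends with distinct predecessors $e_{p1}$ and $e_{p2}$, and without loss of generality $e_{p1}.time < e_{p2}.time < e.time$. Both $(e_{p1},e)$ and $(e_{p2},e)$ are adjacent under skip-till-next-match, so conditions~(1)--(5) hold for each pair and, by the extra clause of Definition~\ref{def:adjacent_events}, $e$ is the \emph{earliest} event adjacent to $e_{p1}$ under skip-till-any-match. Since $e_{p2}$ lies strictly between $e_{p1}$ and $e$ in time, it follows that $e_{p1}$ and $e_{p2}$ are \emph{not} adjacent under skip-till-any-match. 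I would isolate this non-adjacency as the pivotal fact to be contradicted.

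The crux is to refute that non-adjacency using the hypothesis that $e_{p2}$ itself belongs to a matched trend. Being part of a trend, $e_{p2}$ is reachable from a start-type event through a chain of skip-till-next-match adjacencies, hence has its own predecessor (or is a start event). I would combine this reachability with the structure of $P.predTypes$ read off the FSA in Section~\ref{sec:pattern-analyzer}: because the admissible patterns contain neither disjunction nor conjunction, the automaton is essentially a single chain of states with Kleene back-edges, which tightly constrains how the predecessor types of a fixed type relate to each other. Using that both $e_{p1}.type$ and $e_{p2}.type$ lie in $P.predTypes(e.type)$, together with the chain witnessing reachability of $e_{p2}$, I would show that the type, predicate, window and grouping conditions inherited along the trend force conditions~(1)--(5) for the pair $(e_{p1},e_{p2})$ as well, i.e.\ $e_{p1}$ and $e_{p2}$ \emph{are} adjacent under skip-till-any-match --- contradicting the pivotal fact and forcing $e_{p1}=e_{p2}$.

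The step I expect to be the main obstacle is exactly this last one, and it splits into two mechanisms that must be handled together. When $e_{p1}$ and $e_{p2}$ fail to be adjacent only because of a predicate on adjacent events, I would aim to show that this same predicate, propagated along the chain reaching $e_{p2}$, already obstructs the reachability of $e_{p2}$, so it could not have served as a predecessor in any trend. When they fail to be adjacent for a structural (type) reason, I would show that the chain making $e_{p2}$ reachable must pass through an event lying in the interval $(e_{p1}.time,e.time)$ that is itself adjacent to $e_{p1}$ under skip-till-any-match, again contradicting that $e$ is the earliest such successor of $e_{p1}$. Making this case analysis fully rigorous --- in particular pinning down precisely how the linear FSA restricts $P.predTypes$ --- is where the argument needs the most care; an induction on the time-ordered prefix of the stream (equivalently, on the position of $e$) is the device I would use to keep the reachability hypotheses for both predecessors available simultaneously.
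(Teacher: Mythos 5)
Your overall strategy is the same as the paper's: assume $e$ has two distinct predecessors $e_{p1}$ and $e_{p2}$ in two matched trends, order them in time, and derive a contradiction by showing that the later one, $e_{p2}$, obstructs $e_{p1}$ from being adjacent to $e$. The paper closes this in a single sentence by appealing to the informal principle from Section~\ref{sec:semantics} that relevant events cannot be skipped under skip-till-next-match and the contiguous semantics, so ``$e_2$ cannot be skipped in $tr_1$.'' Your treatment of the contiguous case (the predecessor can only be the immediately preceding stream event) is fine and is effectively what that one sentence specializes to.

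For skip-till-next-match, however, your proposal has a genuine gap, and it sits exactly where you yourself flag it. Working from Definition~\ref{def:adjacent_events}, refuting the adjacency of $(e_{p1},e)$ requires exhibiting some event $e'$ with $e'.time < e.time$ that is adjacent to $e_{p1}$ under skip-till-any-match, and you correctly observe that $e_{p2}$ itself need not qualify: conditions (1)--(5) for the pair $(e_{p1},e_{p2})$ do not follow from the facts that both types lie in $P.\mathit{predTypes}(e.type)$ and that $e_{p2}$ is reachable in its own trend. Your claim that they can be ``forced'' along the chain reaching $e_{p2}$ is asserted, not proved, and the type-compatibility condition $e_{p1}.type \in P.\mathit{predTypes}(e_{p2}.type)$ can genuinely fail --- e.g., for $\seq(A+,B+)$ a $b$ is never a predecessor type of an $a$, so a $b$ followed by a later $a$, both of which are predecessor types of a subsequent $b$, defeats the direct argument. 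Your fallback (the chain witnessing reachability of $e_{p2}$ must pass through an event in $(e_{p1}.time, e.time)$ adjacent to $e_{p1}$) is likewise only a conjecture in the write-up; nothing in the FSA structure you invoke is actually shown to guarantee it. So as submitted this is a proof plan whose decisive step is missing, whereas the paper's proof, terse as it is, resolves that step by fiat via the ``relevant events must be matched'' characterization rather than by unwinding Definition~\ref{def:adjacent_events} as you attempt to do.
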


\begin{proof}
Suppose $e$ has different predecessors $e_1$ and $e_2$ in trends $tr_1$ and $tr_2$, respectively. By Definition~\ref{def:adjacent_events}, $e_1.time < e.time$ and $e_2.time < e.time$. Without loss of generality, assume $e_1.time < e_2.time$. By Definitions~\ref{def:event-trend-under-CONT} and \ref{def:event-trend-under-STNM}, relevant events must be matched under the skip-till-next-match and contiguous semantics. Thus, the relevant event $e_2$ cannot be skipped in $tr_1$ and $e_1$ cannot be predecessor event of $e$.
\end{proof}

By Theorem~\ref{theo:predecessor-event-uniqueness} and Definition~\ref{def:adjacent_events}, event adjacency can be determined based on the last matched event and a new event. Thus, only last matched event must be stored.
%

\vspace{-2mm}
\begin{table}[h]
\centering
\begin{tabular}{|c||c|c|}
\hline
Event $e$
& $e.count$
& $\mathit{final}\_count$ \\
\hline
\hline
a1 & \textit{\textbf{1}} & \\ \hline
b2 & & \textit{\textbf{1}} \\ \hline
a3 & \textit{\textbf{2}} & \\ \hline 
a4 & \textit{\textbf{3}} & \\ \hline 
c5 & \textbf{3},\ \ \textit{0} & \\ \hline 
b6 & & \textbf{4},\ \ \textit{1} \\ \hline 
a7 & \textbf{4},\ \ \textit{1} & \\ \hline 
b8 & & \textbf{8},\ \ \textit{2} \\ \hline 
\end{tabular}
\vspace{2mm}
\caption{Pattern-grained trend count under the skip-till-next-match (bold) and contiguous (italics) semantics}
\label{tab:pattern}
\end{table}
\vspace{-6mm}

\begin{example}
Under skip-till-next-match, the trend count computation for our running example  is shown in bold in Table~\ref{tab:pattern}. 
Each $a$ increments the intermediate count $a.count$ by one since it is of a start type of $P$. 
Each $b$ increments the final count since it is of an end type of $P$. 
The irrelevant event $c5$ is skipped and eight trends are detected (Figure~\ref{fig:ess}).

Under the contiguous semantics, the trend count is computed analogously, shown in italics in Table~\ref{tab:pattern}. The only difference is that $c5$ cannot be ignored. To invalidate all partial trends that end at the last matched event $a4$, we set the last matched event to null and its intermediate count to 0. Since $b6$ is neither of a start type of $P$, nor adjacent to the last matched event (null), $b6$ cannot be matched. Two contiguous trends are detected (Figure~\ref{fig:ess}). 
\label{ex:pattern}
\end{example}

\begin{theorem}[\textbf{\textit{Pattern-Grained Trend Count}}]
Let $q$ be a query evaluated under the contiguous or skip-till-next-match semantics and $P$ be its pattern. Let $e_l, e \in I$ be events such that $e_l$ is the predecessor event of $e$ in a trend matched by $q$. 
Then, the pattern-grained counts are computed as follows.
\[
\begin{array}{ll}
e.\mathit{count} &= e_l.\mathit{count}.\\
\mathit{final}\_\mathit{count} &= \sum_{end.type = end(P)} end.\mathit{count}.
\end{array}
\]
If $E=start(P)$, $e.count$ is incremented by one.
\label{theo:correctness-pattern}
\end{theorem}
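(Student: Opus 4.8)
The plan is to prove the statement by induction on the number of matched events $n$, mirroring the structure of the proof of Theorem~\ref{theo:correctness-type} but exploiting the crucial structural difference guaranteed by Theorem~\ref{theo:predecessor-event-uniqueness}: under the contiguous and skip-till-next-match semantics, each event has \emph{at most one} predecessor. First I would pin down the semantic interpretation to be verified inductively, namely that $e.\mathit{count}$ equals the number of (partial) trends ending at $e$ and that $\mathit{final}\_\mathit{count}$ equals the number of finished trends matched by $q$. As in Theorem~\ref{theo:correctness-type}, windows, grouping, and predicates on single events partition the stream, so it suffices to argue for a single sub-stream. For the induction basis $n=1$, a lone matched event $e$ of type $E$ extends no existing trend; if $E=\mathit{start}(P)$ it begins exactly one new trend, giving $e.\mathit{count}=1$, and otherwise $e.\mathit{count}=0$. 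The final count is then determined by whether $E=\mathit{end}(P)$.

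For the induction step, suppose a new event $e$ of type $E$ arrives with predecessor event $e_l$. The heart of the argument is a bijection between the partial trends that end at $e_l$ and the partial trends that end at $e$ and are not the singleton $(e)$. By Theorem~\ref{theo:predecessor-event-uniqueness}, $e_l$ is the \emph{unique} predecessor of $e$ across all trends matched by $q$. Consequently, every partial trend ending at $e$ must have $e_l$ as its penultimate event, and every partial trend ending at $e_l$ extends uniquely by appending $e$ to a partial trend ending at $e$. This bijection yields $e.\mathit{count}=e_l.\mathit{count}$, to which one is added precisely when $E=\mathit{start}(P)$, since then $e$ additionally begins a fresh trend. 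The accumulation of $E.\mathit{count}$ and the computation of the final count follow exactly as in Theorem~\ref{theo:correctness-type}: by Definition~\ref{def:finished-partial-trend} and Section~\ref{sec:pattern-analyzer} only events of type $\mathit{end}(P)$ finish trends, so $\mathit{final}\_\mathit{count}$ is the sum of $\mathit{end}.\mathit{count}$ over all matched events $\mathit{end}$ of type $\mathit{end}(P)$.

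The main obstacle is the contiguous case, where an intervening irrelevant event must invalidate all partial trends ending at the last matched event, as illustrated by $c5$ in Example~\ref{ex:pattern}. I would handle this by appealing to Definition~\ref{def:event-trend-under-CONT}, which forbids any event strictly between $tr.\mathit{start}.\mathit{time}$ and $tr.\mathit{end}.\mathit{time}$ that is not part of $tr$. Hence when such an irrelevant event arrives it severs the adjacency chain through the last matched event, so by Definition~\ref{def:adjacent_events} the next event has no predecessor. In the bijection argument this means the hypothesis ``$e_l$ is the predecessor of $e$'' simply does not apply: the stored last-matched event and its count are reset, and the subsequent event either begins a new trend (when it is of the start type) or remains unmatched. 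I expect verifying that this reset is faithfully reflected in the recurrence to be the one delicate bookkeeping point, whereas the skip-till-next-match case follows directly from predecessor uniqueness with no such invalidation needed.
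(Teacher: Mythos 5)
Your proposal is correct and follows essentially the route the paper intends: the paper only remarks that Theorem~\ref{theo:correctness-pattern} ``can be proven by induction on the number of matched events'' in analogy to Theorem~\ref{theo:correctness-type}, and your induction---using Theorem~\ref{theo:predecessor-event-uniqueness} to collapse the sum over predecessor types into the single term $e_l.\mathit{count}$, plus the start-type increment and the contiguous-case reset---is exactly that argument fleshed out. No gaps; your treatment of the invalidation by irrelevant events under the contiguous semantics matches the behavior of Algorithm~\ref{algo:pattern} and Example~\ref{ex:pattern}.
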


Theorem~\ref{theo:correctness-pattern} can be proven by induction on the number of matched events.

\begin{algorithm}[t]
\caption{Pattern-grained trend count algorithm}
\label{algo:pattern}
\begin{algorithmic}[1]
\Require Query $q$ with pattern $P$, event stream $I$
\Ensure Count of event trends matched by $q$ in $I$
\State $e_l \leftarrow null;\ e_l.count \leftarrow 0;\ \mathit{final}\_\mathit{count} \leftarrow 0$
\ForAll {$e \in I$ of type $E$} 
	\If {$\mathit{isMatched}(e_l,e)$}		
  		\State $e.count \leftarrow (E = \mathit{start}(P))\ ? \ 1\ :\ 0$  	
  		\If {$\mathit{isAdjacent}(e_l,e)$} 			
  			$\switch e.count += e_l.count$	      
        \EndIf
        \If {$E = \mathit{end}(P)$} 
        	$\switch \mathit{final}\_\mathit{count} += e.count$ 
  		\EndIf 
        \State $e_l \leftarrow e$ 
  	\Else \textbf{ if } $q.\mathit{semantics} = \mathit{\contsem}$ \textbf{ then}
  		 \State $\ \ |\ \ \ \ \ e_l \leftarrow null;\ e_l.count \leftarrow 0$
	\EndIf	
\EndFor
\State\Return $\mathit{final}\_\mathit{count}$
\end{algorithmic}
\end{algorithm}

\textbf{Pattern-Grained Trend Count Algorithm}. 
Initially, the last matched event $e_l$ is \textit{null}, $e_l.count$ and the final count are set to 0 (Line~1 in Algorithm~\ref{algo:pattern}).
An event $e$ is matched by the query $q$ if one of the following conditions holds:
(1)~If $e$ is of a start type of $P$, $e$ starts a new trend and $e.count$ is set to one (Line~4).
(2)~If $e_l$ and $e$ are adjacent, $e$ continues existing partial trends and $e.count$ is increased by $e_l.count$ 
to accumulate the number of previously started partial trends (Line~5).
If $e$ is of an end type of $P$, the final count is increased by $e.count$ (Line~6).
The event $e$ becomes the new last event (Line~7).
If $e$ is not matched by the query $q$ that detects contiguous trends, 
then the last matched event $e_l$ and its count are reset 
to invalidate partial trends that end at $e_l$ (Lines~8--9).
The final count is not reset, however, because it accumulates the number of contiguous trends that were detected before the irrelevant event $e$ arrived. The final count is returned (Line~10).

\begin{theorem}[\textbf{\textit{Complexity}}]
Let $q$ be a query evaluated under the contiguous or skip-till-next-match semantics and
$n$ be the number of events per window of $q$.
Algorithm~\ref{algo:pattern} has linear time $O(n)$ and constant space $O(1)$ complexity.
\label{theo:complexity-pattern}
\end{theorem}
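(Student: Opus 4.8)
The plan is to bound the running time and the memory footprint of Algorithm~\ref{algo:pattern} separately. In both cases the essential observation is that the algorithm touches each matched event exactly once and that the work done per event does not grow with the number of events $n$ in the window. The feature I want to exploit is Theorem~\ref{theo:predecessor-event-uniqueness}: under the \contsem\ and \nextsem\ semantics an event has a \emph{unique} predecessor, so the algorithm never has to scan backward over the previously matched events. This is precisely what separates the present bound from the $O(nl)$ and $O(n(t+n_e))$ bounds of the coarser aggregators.

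For the time bound, I would first note that the main loop (Lines~2--9) executes once per event on the stream within the window, i.e.\ $n$ times. It then suffices to argue that each iteration runs in constant time. The test $\mathit{isMatched}(e_l,e)$ and the adjacency test $\mathit{isAdjacent}(e_l,e)$ each compare the new event $e$ against the single stored event $e_l$; by Theorem~\ref{theo:predecessor-event-uniqueness} and Definition~\ref{def:adjacent_events} this single comparison is all that is needed to decide adjacency, so each test is a constant number of predicate evaluations. The remaining steps -- deciding whether $E=\mathit{start}(P)$ or $E=\mathit{end}(P)$, the scalar additions that update $e.\mathit{count}$ and $\mathit{final}\_\mathit{count}$, and the reassignment $e_l \leftarrow e$ -- are all $O(1)$ for a fixed query. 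Summing over the $n$ iterations yields the claimed $O(n)$ time.

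For the space bound, the argument is a direct inventory of what the algorithm retains. Between iterations Algorithm~\ref{algo:pattern} stores only the last matched event $e_l$, its count $e_l.\mathit{count}$, and the running $\mathit{final}\_\mathit{count}$; every other quantity, such as $e$ and $e.\mathit{count}$, is local to a single iteration and can be discarded once $e$ has been absorbed into $e_l$. Since this is a fixed number of values, independent of both $n$ and the pattern length $l$, the space complexity is $O(1)$.

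The one step that merits care, and which I expect to be the crux, is justifying that $\mathit{isMatched}$ and $\mathit{isAdjacent}$ are genuinely constant-time rather than requiring a sweep over past events. This is where Theorem~\ref{theo:predecessor-event-uniqueness} does the real work: because the predecessor is unique and equals the last matched event, no history beyond $e_l$ need be consulted or stored. I would make this dependence explicit so that the contrast with skip-till-any-match -- where any earlier event may be a predecessor and hence must be kept -- is clear.
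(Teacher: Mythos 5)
Your proposal is correct and follows essentially the same route as the paper: $n$ loop iterations with constant work each, and constant storage consisting of the last matched event and two counts. You simply spell out more explicitly (via Theorem~\ref{theo:predecessor-event-uniqueness}) why the adjacency test needs only $e_l$, a point the paper's terser proof leaves implicit.
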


\begin{proof}
The time complexity is determined by the number of matched events: $O(n)$. 
The space complexity is constant since two counts and the last matched event are stored.
\end{proof} 

\begin{theorem}[\textbf{\textit{Time Optimality}}]
The linear time complexity $O(n)$ of Algorithm~\ref{algo:pattern} is optimal.
\label{theo:optimality-pattern}
\end{theorem}

\begin{proof}
Any event trend aggregation algorithm must process $n$ events to ensure correctness of the final aggregate. Thus, the linear time complexity $O(n)$ is optimal. 
\end{proof}

\section{Other Query Clauses}
\label{sec:other}

So far we have focused on the Kleenes patterns, event matching semantics, and predicates on adjacent events. In this section, we handle sliding windows, predicates on single events, 
%
%
and grouping to support all query clauses (Definition~\ref{def:query}). Other possible extensions of the language are discussed in Section~\ref{sec:discussion}.

\textbf{Sliding Windows} partition the unbounded stream into finite time intervals (Definition~\ref{def:query}). Since these intervals may overlap, an event $e$ may fall into $k \geq 1$ windows. The event $e$ may expire in some windows but remain valid in others. To tackle such contiguous nature of streaming, we maintain the counts per window. Each count per event (or type) is assigned a window identifier $wid$~\cite{LMTPT05-2} and is computed based on the predecessor events (and/or types) with the same identifier $wid$. Analogously, the final count for window $wid$ is computed based on the counts of events of end type of $P$ (or count of the end type of $P$) for the window $wid$~\cite{PLRM18}.

\textbf{Predicates on Single Events} are classified into~\cite{PLRM18,QCRR14}: 

\textbf{\textit{Local predicates}} restrict the attribute values of matched events For example, the predicate \textit{(M.activity = passive)} in query $q_1$ in Section~\ref{sec:introduction} selects those measurements that were taken during passive physical activities. Such predicates filter the stream. Thereafter, our \app\ approach applies.

\textbf{\textit{Equivalence predicates}} require that all events in a trend to carry the same value of an attribute. For example, query $q_2$ has a predicate \textit{[driver]} that requires all events in an Uber pool trip to have the same driver identifier. Such predicates partition the stream into non-overlapping sub-streams by the values of this attribute enabling scalable parallel execution (Section~\ref{sec:discussion}). Thereafter, our \app\ approach applies to each sub-stream. 


\textbf{Event Trend Grouping} ensures that all events in a trend carry the same values of grouping attributes, e.g., queries $q_1$--$q_3$ in Section~\ref{sec:introduction}. Similarly to equivalence predicates, event trend grouping partitions the stream into non-overlapping sub-streams. Then, our approach applies to each sub-stream.

\section{Discussion}
\label{sec:discussion}

We now discuss how to extend our query language (Definition~\ref{def:query}) by other features to relax the assumptions in Section~\ref{sec:model}.

\begin{table}[t]
\centering
\begin{tabular}{|p{3.2cm}|p{5.5cm}|p{4.4cm}|}
\hline
\textbf{Type-grained}\newline\textbf{aggregates}
& \textbf{Mixed-grained}\newline\textbf{aggregates}
& \textbf{Pattern-grained}\newline\textbf{aggregates}
\\
\hline
$\textbf{\textit{e.count}}_\textbf{\textit{E}} =$ 
\newline
$e.count +$
\newline
$\sum_{E' \in \mathcal{T}} E'.count_E$
\newline\newline
$\textbf{\textit{x.count}}_\textbf{\textit{E}} =$
\newline
 $\sum_{X' \in \mathcal{Y}} X'.count_E$
& 
$\textbf{\textit{e.count}}_\textbf{\textit{E}} =$ 
\newline
$e.count +$
\newline
 $\sum_{E' \in \mathcal{T}_t} E'.count_E +$
 \newline
 $\sum_{e".type \in \mathcal{T}_e,\;(e",e)\theta=\top} e".count_E$
 \newline
$\textbf{\textit{x.count}}_\textbf{\textit{E}} =$
\newline
 $\sum_{X' \in \mathcal{Y}_t} X'.count_E +$
 \newline
 $\sum_{x".type \in \mathcal{Y}_e,\;(x",x)\theta=\top} x".count_E$
& 
$\switch \textbf{\textit{e}}_\textbf{\textit{l}}\textbf{\textit{.count}}_\textbf{\textit{E}} +=$ 
\newline
$e.count$
\\
\multicolumn{2}{|l|}{\hspace*{2cm}$\textbf{\textit{T.count}}_\textbf{\textit{E}} = \sum_{t.type = T} t.count_E$}
& 
$\textbf{\textit{COUNT(E)}} =$
\\
\multicolumn{2}{|l|}{\hspace*{2cm}$\textbf{\textit{COUNT(E)}} = end(P).count_E$}
& 
 $\sum_{e_{l}.type = end(P)} e_{l}.count_E$
\\
\hline
$\textbf{\textit{e.min}} =$ 
\newline
 $\textsf{min}_{E' \in \mathcal{T}}$ 
\newline
 $(e.attr,E'.min)$
\newline
$\textbf{\textit{x.min}} =$ 
\newline
 $\textsf{min}_{X' \in \mathcal{Y}}(X'.min)$
& 
$\textbf{\textit{e.min}} =$ 
\newline
 $\textsf{min}_{E' \in \mathcal{T}_t,\ e".type \in \mathcal{T}_e,\ (e",e)\theta = \top}$
 \newline
 $(e.attr,E'.min,e".min)$\newline
$\textbf{\textit{x.min}} =$ 
\newline
 $\textsf{min}_{X' \in \mathcal{Y}_t,\ x".type \in \mathcal{Y}_e,\ (x",x)\theta = \top}$
 \newline
 $(X'.min,x".min)$
& 
$\textbf{\textit{e}}_\textbf{\textit{l}}\textbf{\textit{.min}} =$ 
\newline
$\textsf{min}(e.attr,e_{l}.min)$
\\
\multicolumn{2}{|l|}{\hspace*{2cm}$\textbf{\textit{T.min}} = \textsf{min}_{t.type = T}(t.min)$}
& 
$\textbf{\textit{MIN(E.attr)}} =$
\\
\multicolumn{2}{|l|}{\hspace*{2cm}$\textbf{\textit{MIN(E.attr)}} = end(P).min$}
& 
 $\textsf{min}_{e_l.type=end(P)}(e_l.min)$
\\
\hline
$\textbf{\textit{e.sum}} =$ 
\newline
 $e.attr*e.count +$
\newline 
 $\sum_{E' \in \mathcal{T}} E'.sum$
\newline\newline
$\textbf{\textit{x.sum}} =$ 
\newline
 $\sum_{X' \in \mathcal{Y}} X'.sum$
& 
$\textbf{\textit{e.sum}} =$ 
\newline
 $e.attr*e.count +$
\newline 
 $\sum_{E' \in \mathcal{T}_t} E'.sum +$\newline
 $\sum_{e".type \in \mathcal{T}_e,\ (e",e)\theta = \top} e".sum$
\newline
$\textbf{\textit{x.sum}} =$ 
\newline
 $\sum_{X' \in \mathcal{Y}_t} X'.sum +$
\newline
 $\sum_{x".type \in \mathcal{Y}_e,\ (x",x)\theta = \top} x".sum$
& 
$\switch \textbf{\textit{e}}_\textbf{\textit{l}}\textbf{\textit{.sum}} +=$ 
\newline
$e.attr * e.count$
\\
\multicolumn{2}{|l|}{\hspace*{2cm}$\textbf{\textit{T.sum}} = \sum_{t.type = T} t.sum$}
& 
$\textbf{\textit{SUM(E.attr)}} =$
\\
\multicolumn{2}{|l|}{\hspace*{2cm}$\textbf{\textit{SUM(E.attr)}} = end(P).sum$}
& 
 $\sum_{e_l.type = end(P)} e_l.sum$
\\
\hline
\end{tabular}
\vspace{2mm}
\caption{Coarse-grained event trend aggregation 
($e,x,e",x",t \in I$ are matched events,
$e.type = E$, 
$x.type = X \neq E$, 
$T$ is any event type in $P$,
$P.predTypes(E) = \mathcal{T} = \mathcal{T}_t \mathbin{\dot{\cup}} \mathcal{T}_e$, 
$P.predTypes(X) = \mathcal{Y} = \mathcal{Y}_t \mathbin{\dot{\cup}} \mathcal{Y}_e$
where type-grained aggregates are maintained for types $\mathcal{T}_t$ and $\mathcal{Y}_t$) 
and event-grained aggregates are maintained for events of types  $\mathcal{T}_e$ and $\mathcal{Y}_e$}
\label{tab:aggregation-functions}
\end{table}

\textbf{Aggregation Functions}. 
While so far we have focused on event trend count computation, i.e., \mycount (*), we now sketch how the principles of coarse-grained online trend aggregation apply to other aggregation functions (Section~\ref{sec:model}). Table~\ref{tab:aggregation-functions} defines \mycount $(E)$, \mymin $(E.\mathit{attr})$, and\mysum $(E.\mathit{attr})$ at different granularities. In contrast to \mycount (*), only matched events $e$ of type $E$ update the aggregates. All other matched events $x$ of type $X \neq E$ propagate the aggregates from previous to more recent events in a trend matched by a pattern $P$ (or event types in $P$). \mymax (E.attr) is maintained analogously to \mymin $(E.\mathit{attr})$, while \myavg $(E.\mathit{attr})$ is computed based on \mysum $(E.\mathit{attr})$ and \mycount $(E)$ (Section~\ref{sec:model}).

\textbf{Negated Sub-Patterns}.
We split the pattern into positive and negative sub-patterns and maintain aggregates for each sub-pattern separately~\cite{PLRM18}. If aggregates are maintained per matched event, whenever a negative sub-pattern $N$ finds a match (i.e., its \mycount (*) = 1), all previously matched events of predecessor types $\mathcal{T}_p$ of $N$ are marked as incompatible with all future events of following types $\mathcal{T}_f$ of $N$. If aggregates are maintained per type, the aggregates of all predecessor types $\mathcal{T}_p$ are marked as invalid to contribute to aggregates of the following types $\mathcal{T}_f$. Lastly, if aggregates are maintained per pattern, $e_l$ of the previous sup-pattern of $N$ is set to null.

\textbf{Disjunction and Conjunction} can be supported by our \app\ approach without changing its
complexity because the aggregates for a disjunctive or a conjunctive pattern $P$ can be computed
based on the aggregates for the sub-patterns of $P$~\cite{PLRM18}.

\textbf{Kleene Star and Optional Sub-Patterns} can also be supported without changing the complexity
since they are syntactic sugar operators. Indeed, $\seq (P_i*, P_j) = \seq (P_i+, P_j) \vee P_j$ and $\seq (P_i?, P_j) = \seq (P_i, P_j ) \vee P_j$.

\textbf{Multiple Event Type Occurrences in a Pattern}.
If a type appears several time in a pattern, our \app\ approach applies with the following modifications. 
(1)~We assign an identifier to each type. For example, $\seq (A+, B, A)$ is translated into $\seq (A1+, B, A2)$. Then, each state in an FSA representation of the pattern has a unique label (Section~\ref{sec:pattern-analyzer}). 
(2)~An event $e$ may be inserted into several sub-graphs (or update several type-grained aggregates) under the skip-till-any-match semantics. However, e may not be its own predecessor event since an event may occur in a trend at most once~\cite{PLRM18}.

\textbf{Predicates on Minimal Trend Length} exclude too short and thus unreliable trends. One way of modeling such constraints is to unroll a pattern to its minimal length. For example, if we are interested trends matched by the pattern $A+$ and with $\mathit{length} \geq 3$, then we unroll the pattern $A+$ to length 3 as follows: $\textsf{SEQ}(A,A,A+)$. Our \app\ approach applies thereafter.

\textbf{Parallel Processing}.
As described in Section~\ref{sec:other}, equivalence predicates and \group\ clause partition the stream into sub-streams that are processed in parallel independently from each other. Such stream partitioning enables a highly scalable execution as demonstrated in Section~\ref{exp:grouping}. Analogously, each window is processed independently from other windows. However, events within the same substream and the same window must be processed in-order by their time stamp to guarantee correctness~\cite{sstore}. A stream transaction is a sequence of operations triggered by all events with the same time stamp. The application time stamp of a transaction (and all its operations) coincides with the application time stamp of the triggering events. For each time stamp $t$, our time-driven scheduler waits till the processing of all transactions with time stamps smaller than $t$ is completed. Then, the scheduler extracts all events with the time stamp $t$, wraps their processing into transactions, and submits them for execution.

\section{Performance Evaluation}
\label{sec:evaluation}

\subsection{Experimental Setup}
\label{sec:exp_setup}

\textbf{Infrastructure}. 
We have implemented our approach in Java with JRE 1.7.0\_25 running on Ubuntu 14.04 with 16-core 3.4GHz CPU and 128GB of RAM. We execute each experiment three times and report their average results here.

\textbf{Methodology}. 
We compare of our \app\ approach to Flink~\cite{flink}, SASE~\cite{ZDI14}, A-Seq~\cite{QCRR14}, and \greta~\cite{PLRM18} since they cover the spectrum of state-of-the-art event aggregation approaches (Table~\ref{tab:spectrum}). We run Flink on the same hardware as our platform. To achieve a fair comparison, we implemented SASE, A-Seq, and \greta\ on top of our platform. While Section~\ref{sec:related_work} is devoted to a detailed discussion of these approaches, we summarize them in Table~\ref{tab:methodology}.

\vspace*{-1mm}
\begin{table}[h]
\centering
\begin{tabular}{|c||c|c|c|c|c|c|}
\hline
\multirow{2}{*}{Approach}
& Kleene
& \multicolumn{3}{|c|}{Semantics}
& Predicates on
& Online trend
\\
& closure
& \anysem
& \nextsem
& \contsem
& adjacent events
& aggregation
\\
\hline
\hline
Flink 
& $-$
& +
& $-$
& +
& +
& $-$
\\
\hline
SASE
& +
& +
& +
& +
& +
& $-$
\\
\hline
\greta
& +
& +
& $-$
& $-$
& +
& $-$
\\
\hline
A-Seq 
& $-$
& +
& $-$
& $-$
& $-$
& $-$
\\
\hline
\textbf{\app}
& \textbf{+}
& \textbf{+}
& \textbf{+}
& \textbf{+}
& \textbf{+}
& \textbf{+}
\\
\hline
\end{tabular}
\vspace{1mm}
\caption{Expressive power of the event aggregation approaches}
\label{tab:methodology}
\end{table}
\vspace*{-4mm}

$\bullet$ \textbf{\textit{Flink}}~\cite{flink} is a popular open-source streaming platform that supports pattern matching. We express our queries using Flink operators such as event sequence, window, and grouping. Similarly to other industrial systems~\cite{esper,oracle}, Flink does not support Kleene closure. Thus, we flatten our queries as follows. 
For each Kleene pattern $P$, we first determine the length $l$ of the longest match of $P$. We then specify a set of fixed-length event sequence queries that cover all possible lengths up to $l$. Flink supports the skip-till-any-match and contiguous semantics. It implements a two-step approach that constructs all event sequences prior to their aggregation.

$\bullet$ \textbf{\textit{SASE}}~\cite{ZDI14} supports Kleene closure and all event matching semantics. It implements the two-step approach. Namely, it first stores each event $e$ in a stack and computes the pointers to $e$'s previous events in a trend. For each window, a DFS-based algorithm traverses these pointers to construct all trends. Then, these trends are aggregated. 

$\bullet$ \textbf{\textit{\greta}}~\cite{PLRM18} captures all matched events and the trend relationships among them as a graph. Based on the graph, it computes event trend aggregation online, that is, it avoids trend construction. However, it supports only skip-till-any-match semantics and maintains aggregates at a fine granularity. 

$\bullet$ \textbf{\textit{A-Seq}}~\cite{QCRR14} avoids event sequence construction by dynamically maintaining a count for each prefix of a pattern. Since A-Seq does not support Kleene closure, we flatten our queries as described above. A-Seq supports only the skip-till-any-match semantics. 
It does not support arbitrary predicates on adjacent events beyond equivalence predicates, such as \textit{[patient]} in query $q_1$ in Section~\ref{sec:introduction}.

\textbf{Data Sets}. 
We compare our \app\ approach to the state-of-the-art techniques using the following data sets.

$\bullet$ \textbf{\textit{Physical activity monitoring real data set}}~\cite{RS12} contains physical activity reports for 14 people during 1 hour 15 minutes. 18 activities are considered. A report carries time stamp in seconds, person identifier, activity identifier, and heart rate. The size of the data set is 1.6GB.   

$\bullet$ \textbf{\textit{Stock real data set}}~\cite{stockStream} contains 225k transaction records of 19 companies in 10 sectors. Each event carries time stamp in seconds,  company identifier, sector identifier, transaction identifier, transaction type (sell or buy), volume, price, etc. We replicate this data set 10 times. 

$\bullet$ \textbf{\textit{Public transportation synthetic data set}}. 
Our stream generator creates trips for 30 passengers using public transportation services in a city with 100 stations. Each event carries a time stamp in seconds, passenger identifier, station identifier, and waiting time in seconds. Waiting durations are generated uniformly at random.

\textbf{Event Queries}.
We evaluate variations of queries $q_1$--$q_3$ in Section~\ref{sec:introduction} against these data sets. We vary the event matching semantics, the number of events per window, predicate selectivity and the number of trend groups.
Unless stated otherwise, we evaluate our queries under the skip-till-any-match semantics since all state-of-the-art approaches support this semantics (Table~\ref{tab:methodology}).
To ensure that the two-step approaches terminate at least in some cases, the default number of events per window is 50k.
Since A-Seq does not support arbitrary predicates on adjacent events, we evaluate our queries without such predicates by default.
Unless stated otherwise, the number of trend groups is 
14 for the physical activity data set, 
19 for the stock data set, and 
30 for the public transportation data set.

\textbf{Metrics}. 
We measure the common metrics for streaming systems, namely, latency, throughput, and peak memory. 
\textbf{\textit{Latency}} is measured in milliseconds as the average time difference between the time of the aggregation result output and the arrival time of the latest event that contributes to the respective result.
\textbf{\textit{Throughput}} corresponds to the average number of events processed by all queries per second.
\textbf{\textit{Peak memory}} includes the memory for storing
the aggregates and sub-graphs for \app,
the \greta\ graph for \greta,
prefix counters for A-Seq,
events in stacks, pointers between them, and trends for SASE, and 
trends for Flink.

\subsection{Event Matching Semantics}

\begin{figure}[t]
\centering
\begin{minipage}{0.45\textwidth}
	\centering
    \includegraphics[width=0.7\textwidth]{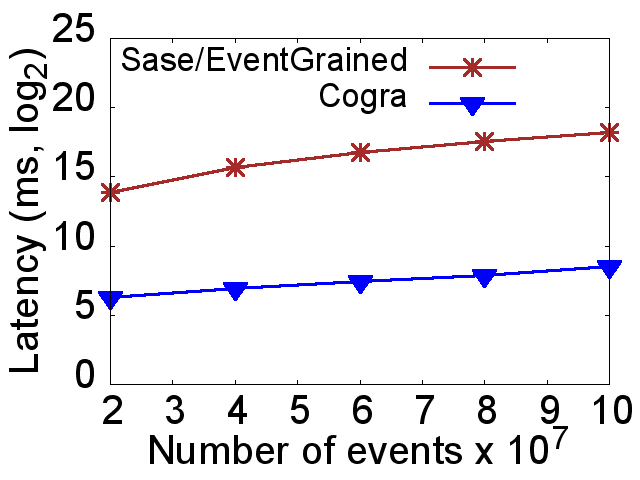}
    \vspace*{-2mm}
    \caption{Contiguous semantics (All approaches, physical activity real data set)}
    \label{fig:cont}
\end{minipage}
\hspace*{1cm}
\begin{minipage}{0.45\textwidth}
	\centering
    \includegraphics[width=0.7\textwidth]{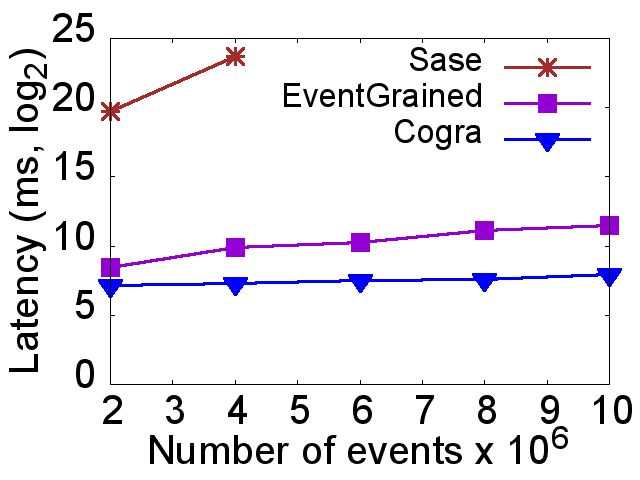}
    \vspace*{-2mm}
    \caption{Skip-till-next-match semantics (All approaches, public transportation data set)}
    \label{fig:next}
\end{minipage}
 \end{figure}

In Figures~\ref{fig:cont}--\ref{fig:any-online}, we compare the performance of \app\ to the state-of-the-art approaches under diverse event matching semantics, while varying the number of events per window. If an approach does not support a semantics (Table~\ref{tab:methodology}), the approach is not shown in the chart for this semantics.

\textbf{Two-step approaches} perform well for high-rate event streams only under the most restrictive contiguous semantics (Figure~\ref{fig:cont}) because the number and length of contiguous trends are relatively small. Nevertheless, \app\ achieves 27--fold speed up compared to Flink and 12--fold speed up compared to SASE when the number of events per window reaches 100M.

\textbf{\textit{Flink}} is not optimized for event trend aggregation for the following two reasons. 
(1)~Flink evaluates a workload of event sequence queries for each Kleene query. 
(2)~Flink first constructs all event sequences and then aggregates them. 
Consequently, under the skip-till-any-match semantics, the latency and memory usage of Flink grow exponentially in the number of events per window, while its throughput decreases exponentially (Figure~\ref{fig:any-all}). Flink does not terminate when the number of events exceeds 40k. For 40k events, \app\ achieves 4 orders of magnitude speed-up and uses 8 orders of magnitude less memory than Flink. 

\textbf{\textit{SASE}} supports Kleene patterns but also implements a two-step approach. Under the skip-till-any-match semantics, the latency and memory usage of SASE grow exponentially in the number of events, while its throughput degrades exponentially (Figure~\ref{fig:any-all}). SASE fails to terminate when the number of events exceeds 40k. For 40k events, \app\ achieves 3 orders of magnitude speed-up and 4 orders of magnitude memory reduction compared to SASE. 

Even under skip-till-next-match, SASE does not terminate if a window contains over 4M events (Figure~\ref{fig:next}). For 4M events, SASE returns results with an over 3 hours long delay -- which is unacceptable for time-critical applications. \app\ achieves 4 orders of magnitude speed-up and 5 orders of magnitude memory reduction compared to SASE in this case.



\begin{figure}[t]
    \centering
    \subfigure[Latency]{
    	\includegraphics[width=0.3\textwidth]{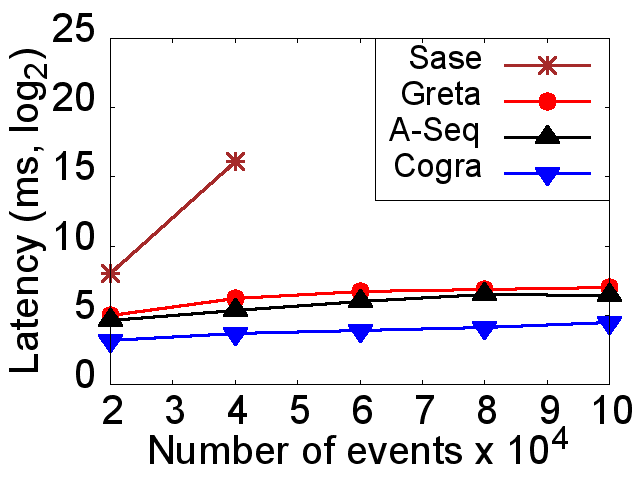}
    	 \label{fig:any-latency}
	}
	\subfigure[Memory]{
    	\includegraphics[width=0.3\textwidth]{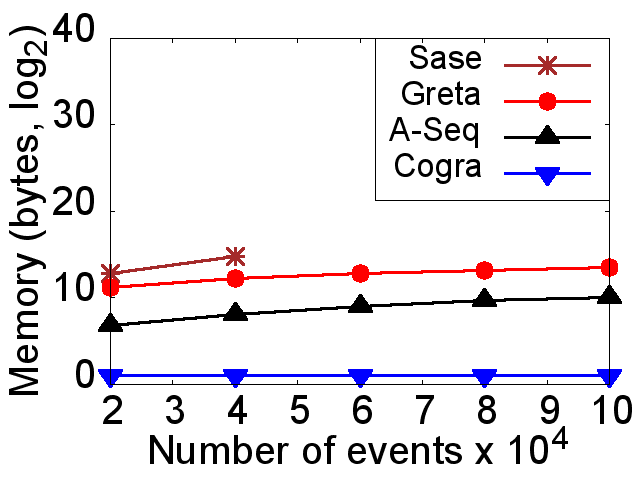}
    	\label{fig:any-mem}
	}
	\subfigure[Throughput]{
    	\includegraphics[width=0.3\textwidth]{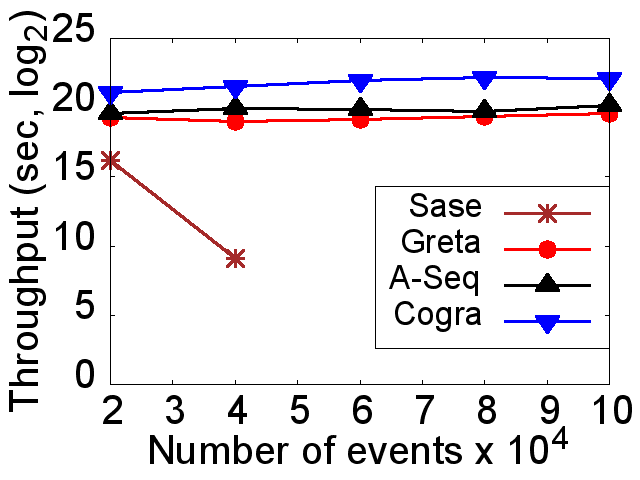}
    	 \label{fig:any-thru}
	}
    \vspace*{-2mm}
    \caption{Skip-till-any-match semantics (All approaches, stock real data set)}
    \label{fig:any-all}
\end{figure}
\begin{figure}[t]
    \centering
 	\subfigure[Latency]{
    	\includegraphics[width=0.3\textwidth]{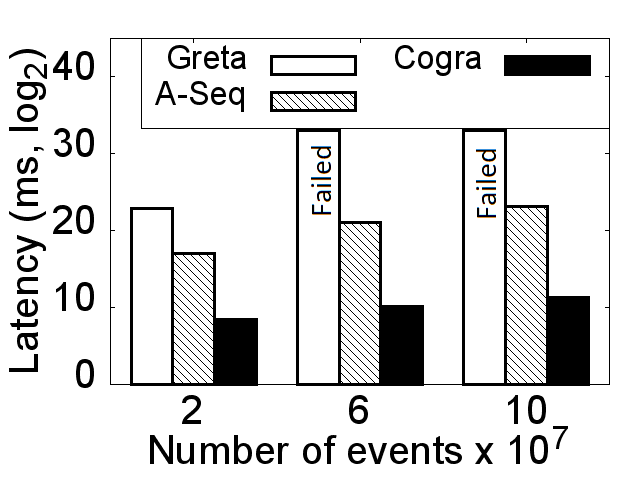}
    	\label{fig:any-latency-online}
	}
    \subfigure[Memory]{
    	\includegraphics[width=0.3\textwidth]{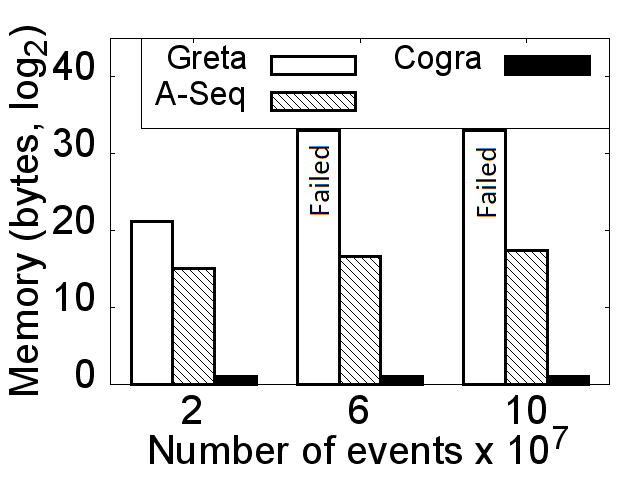}
    	\label{fig:any-mem-online}
	}
	\subfigure[Throughput]{
    	\includegraphics[width=0.3\textwidth]{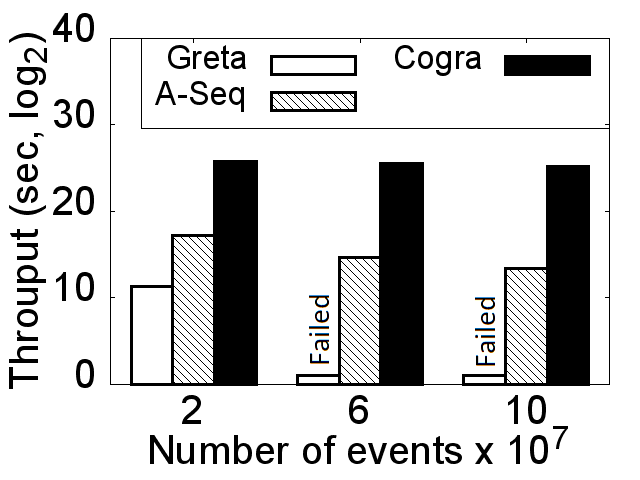}
    	 \label{fig:any-thru-online}
	}
    \vspace*{-2mm}
    \caption{Skip-till-any-match semantics (Online approaches, stock real data set)}
    \label{fig:any-online}
\end{figure}

\textbf{Online approaches} perform similarly for low-rate event streams (Figure~\ref{fig:any-all}), while high-rate streams reveal the difference between them (Figure~\ref{fig:any-online}).

\textbf{\textit{\greta}} captures all matched events and their trend relationships as a graph. While the overhead of graph construction is negligible for low-rate streams, it becomes a bottleneck when the stream rate increases. Under skip-till-any-match, \greta\ does not terminate if the stream rate is over 20M events per window (Figure~\ref{fig:any-latency-online}). For 20M events, \greta\ suffers from over an hour long delay which is 4 orders of magnitude higher compared to \app.

\textbf{\textit{A-Seq}} performs best among the state-of-the-art approaches. However, its expressive power is limited (Table~\ref{tab:methodology}). Also, A-Seq must evaluate a workload of event sequence queries for each Kleene pattern. The number of queries grows linearly in the number of events. Thus, the latency of A-Seq is 3 orders of magnitude higher compared to \app\ when the number of events per window reaches 100M (Figure~\ref{fig:any-latency-online}). 
Each event sequence query maintains a fixed number of aggregates~\cite{QCRR14}. Thus, the memory usage of A-Seq grows linearly with the number of queries (i.e., with the number of events). A-Seq requires 4 orders of magnitude more memory than \app\ for 100M events (Figure~\ref{fig:any-mem-online}).

\textbf{\textit{Cogra}} performs well for all semantics and stream rates because \app\ maintains a fixed number of aggregates per Kleene pattern. Its memory usage is constant (Figures~\ref{fig:any-mem} and \ref{fig:any-mem-online}).
The latency of \app\ grows linearly in the number of events. For 100M events per window, the latency of \app\ stays within 3 seconds (Figure~\ref{fig:any-latency-online}), while its throughput is over 39M events per second (Figure~\ref{fig:any-thru-online}).
In short, \app\ enables real-time and in-memory event trend aggregation.


\subsection{Predicate Selectivity}

In Figure~\ref{fig:predicates}, we focus on the selectivity of predicates on adjacent events, while predicates on single events determine the number of trend groups (Section~\ref{sec:other}) that is varied in Section~\ref{exp:grouping}.
To ensure that the two-step approaches terminate in most cases, we run this experiment against a low-rate stream of 50k events per window.
Since A-Seq does not support expressive predicates, it is not evaluated here.

\textbf{Two-step approaches}. 
When the predicate selectivity increases, more and longer trends are constructed and stored by \textbf{\textit{Flink}}. Since its latency and memory usage grow exponentially (Table~\ref{tab:complexity}, Figure~\ref{fig:predicates}), Flink fails to terminate once the predicate selectivity exceeds 50\%. When the predicate selectivity is 50\%, \app\ achieves 3 orders of magnitude speed-up and 3 orders of magnitude memory reduction compared to Flink. 

\textbf{\textit{SASE}} constructs all trends. Thus, its latency grows exponentially. Only the current trend is stored however. The number of stored pointers between events increases when the predicate selectivity grows. Thus, the memory costs grow linearly. The latency of \app\ is 2 orders of magnitude lower, while it uses 13--fold less memory than SASE for 90\% predicate selectivity.

 \begin{figure}[t]
 \centering
      \subfigure[Latency]{
     	\includegraphics[width=0.3\columnwidth]{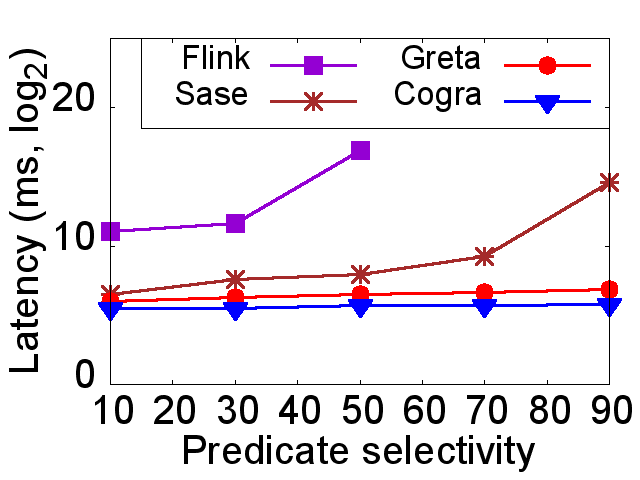}
     	 \label{fig:predicates-latency}
 	}
 	\hspace*{1cm}
 	\subfigure[Memory]{
     	\includegraphics[width=0.3\columnwidth]{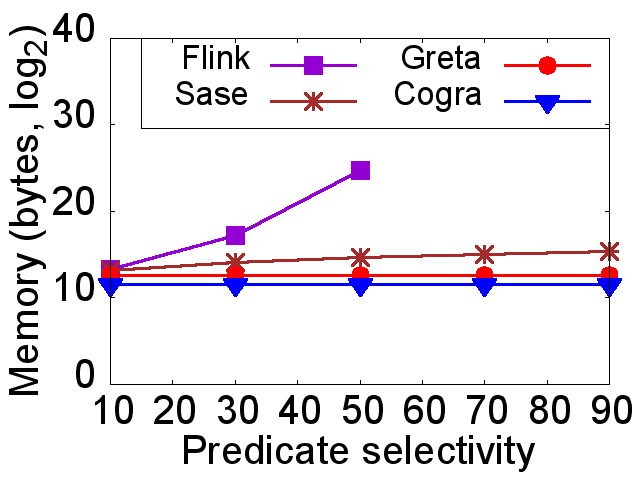}
     	\label{fig:predicates-mem}
 	}
     \vspace*{-2mm}
 	\caption{Predicate selectivity (All approaches, stock data set)} 
     \label{fig:predicates}
 \end{figure}

\textbf{Online approaches} perform well for such low-rate stream of 50k events per window. When the predicate selectivity increases, \textbf{\textit{\greta}} stores the same events in the \greta\ graph but the number of edges between them increases. Thus, latency increases linearly in the number of edges. Since edges are not stored, the memory consumption remains stable.

Similarly, the number of aggregates maintained by \textbf{\textit{\app}} stays the same with the growing predicate selectivity. Since \app\ maintains aggregates per event type (not per event), it achieves double speed-up and memory reduction compared to \greta\ when the predicate selectivity reaches 90\%. 

 \subsection{Event Trend Grouping}
 \label{exp:grouping}

In Figure~\ref{fig:groups}, we vary the number of trend groups. To ensure that the two-step approaches terminate in most cases, we run this experiment against a low-rate stream of 50k events per window. 
When there are only few groups, then the maintenance costs of each group can be costly due to high contents. Thus, the latency of all approaches reduces with the increasing number of trend groups.

\textbf{Two-step approaches}. 
Since trends are constructed per group, the number and length of trends decrease with the growing number of groups. Thus, the latency and memory costs of \textbf{\textit{Flink}} decrease exponentially when the number of groups increases. Flink fails to terminate when the number of groups is fewer than 15. For 15 groups, \app\ wins 5 orders of magnitude with respect to latency and 8 orders of magnitude regarding memory compared to Flink. 

\textbf{\textit{SASE}} constructs all trends without storing them. Thus, its latency reduces exponentially, while its memory consumption decreases linearly with the growing number of trend groups. SASE does not terminate for fewer than 25 groups. For 25 groups, \app\ achieves 4 orders of magnitude speed-up and 3 orders of magnitude memory reduction compared to SASE.

 \begin{figure}[t]
 	\centering
     \subfigure[Latency]{
     	\includegraphics[width=0.3\columnwidth]{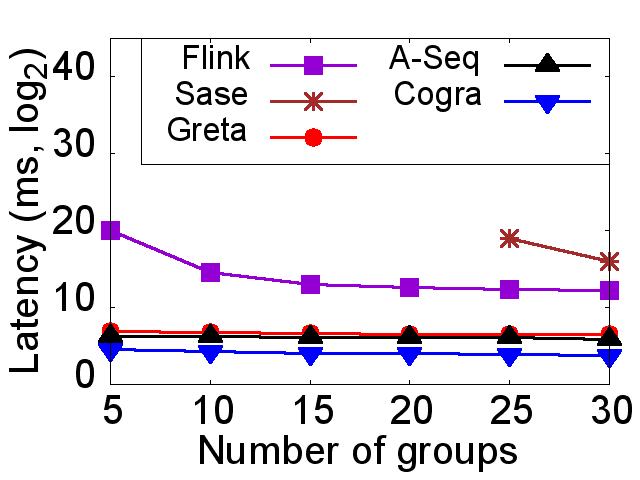}
     	 \label{fig:groups-latency}
 	}
 	\hspace*{1cm}
 	\subfigure[Memory]{
     	\includegraphics[width=0.3\columnwidth]{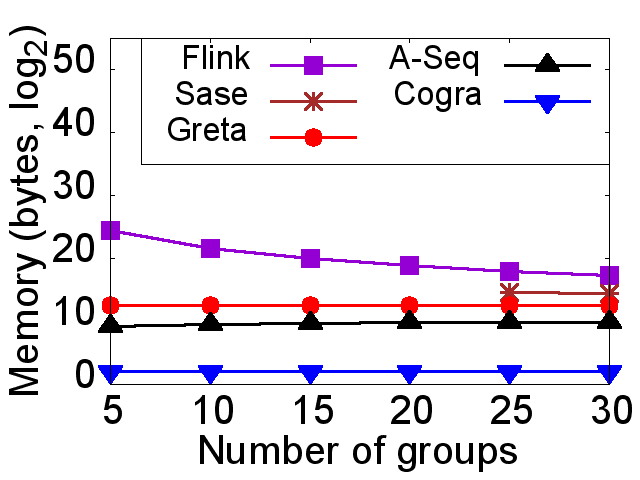}
     	\label{fig:groups-mem}
 	}
     \vspace*{-2mm}
     \caption{Event trend grouping (All approaches, public transportation data set)}
     \label{fig:groups}
 \end{figure}

\textbf{Online approaches} perform well independently from the number of event trend groups since trends are not constructed. 
The latency of \textbf{\textit{A-Seq}} reduces linearly when the number of groups increases. Since A-Seq evaluates a workload of event sequence queries for each Kleene query, its latency is 5--fold higher than the latency of \app\ for 5 groups.
A-Seq maintains aggregates per group. Thus, its memory costs increase linearly in the number of groups. \app\ wins 2 orders of magnitude regarding memory usage compared to A-Seq for 30 groups.

The latency of \textbf{\textit{\greta}} decreases linearly with the increasing number of groups. Due to the \greta\ graph construction overhead, the latency of \greta\ is 7--fold higher than the latency of \app\ for 5 groups. 
The memory usage of \greta\ remains stable when varying the number of groups because the same number of events is stored in the \greta\ graphs. 
The memory costs of \greta\ are 3 orders of magnitude higher compared to \app\ in all cases.

\section{Related Work}
\label{sec:related_work}

\textbf{Complex Event Processing} approaches including SA\-SE~\cite{ADGI08, WDR06, ZDI14}, Cayuga~\cite{DGPRSW07}, ZStream~\cite{MM09}, AFA~\cite{CGM10}, E-Cube~\cite{LRGGWAM11}, and CET~\cite{PLAR17} solve orthogonal problems. 
Many of them deploy a Finite State Automaton-based query execution paradigm~\cite{ADGI08, WDR06, ZDI14, DGPRSW07, CGM10, MZZ12, DGNSZ15}. 
AFA supports dynamic pattern detection over disordered streams. 
ZStream translates an event query into an operator tree optimized using rewrite rules.
E-Cube employs hierarchical event stacks to share events across different event queries.
CET solves the trade-off between the CPU and memory during event trend detection.
Some approaches focus on XPath-based query processing over XML streams~\cite{MZZ12, DGNSZ15}.
Others implement hardware-based CEP on FPGAs at gigabit wire speed~\cite{WTA10}.
However, the expressive power of these approaches is limited. 
Indeed, they do not support 
Kleene closure~\cite{LRGGWAM11}, 
nor aggregation~\cite{PLAR17, CGM10, WTA10, MZZ12}, 
nor various event matching semantics~\cite{LRGGWAM11, PLAR17, DGPRSW07, MM09, CGM10, WTA10, MZZ12, DGNSZ15}.
In contrast to them, SASE supports all above query language constructs.
However, it does not design any optimization techniques for event trend aggregation and thus requires trend construction prior to trend aggregation. Due to the exponential time complexity of trend construction (Table~\ref{tab:complexity}), this two-step approach fails to respond within a few seconds (Section~\ref{sec:evaluation}).

In contrast, A-Seq~\cite{QCRR14} proposes \textit{online aggregation of fixed-length event sequences} under skip-till-any-match. However, it supports neither Kleene closure, nor other event matching semantics, nor expressive predicates on adjacent events. Thus, A-Seq does not tackle the challenges of this work.
\greta~\cite{PLRM18} defines \textit{online event trend aggregation} under skip-till-any-match. \greta\ does not support other semantics. It captures all event trends and aggregates as a graph. It avoids event trend construction and thus reduces the time complexity from exponential to quadratic in the number of events in the worst case. Since \greta\ maintains aggregates at the event granularity, it does not achieve optimal time complexity for several query classes and thus suffers from long delays  (Section~\ref{sec:evaluation}). 

\textbf{Traditional Data Streaming} approaches~\cite{AW04, GHMAE07, KWF06, LMTPT05, LMTPT05-2, THSW15, ZKOS05, ZKOSZ10} support aggregation computation over data streams. Some incrementally aggregate \textit{raw input events for single-stream queries}~\cite{LMTPT05, LMTPT05-2}. Others share aggregation results among overlapping sliding windows~\cite{AW04, LMTPT05} or multiple queries~\cite{KWF06, ZKOS05, ZKOSZ10}.
However, these approaches are restricted to Select-Project-Join queries with window semantics. That is, their execution paradigm is set-based. They support neither various event matching semantics nor CEP-specific operators such as event sequence and Kleene closure that treat the order of events as  a first-class citizen. These approaches require the \textit{construction of join results} prior to their aggregation, i.e., they define incremental aggregation of \textit{single raw events} but implement a two-step approach for join results.

Industrial streaming systems such as Flink~\cite{flink}, Esper~\cite{esper}, and Oracle Stream Analytics~\cite{oracle} support fixed-length event sequences. They support neither Kleene closure nor different semantics. While Kleene closure computation can be simulated by a set of event sequence queries covering all possible lengths of event trends, this approach is possible only if the maximal length of a trend is known apriori. This is rarely the case in practice. Furthermore, this approach is highly inefficient for two reasons. One, it must execute a set of event sequence queries for each Kleene query. This increased workload degrades system performance. Two, since this approach requires trend construction prior to their aggregation, it has exponential time complexity in the worst case (Table~\ref{tab:complexity}).

\textbf{Static Sequence Databases} extend traditional SQL queries by order-aware join operations and support aggregation of their results~\cite{LS03, LKHLCC08}. However, they do not support Kleene closure. Instead, \textit{single data items} are aggregated~\cite{LS03, MZ97, SZZA04, SLR96}. They also do not support various matching semantics. Lastly, these approaches assume that the data is statically stored and indexed prior to processing. Hence, they do not tackle challenges that arise due to dynamically streaming data such as event expiration and real-time processing. 

\textbf{Sequential Pattern Mining} aims at detecting top-k item sequences that frequently occur in a given data set. The sequence patterns of interest are not known prior to processing. These approaches employ a Prefix tree and the Apriori algorithm to optimize the detection~\cite{HPMCDH00, HPYM04, MDH08, MTV97, PHMPCDH01}. 
This problem is distinct from our problem in which the event patterns of interest are specified by the user prior to processing. Frequent pattern mining approaches do not tackle real-time aggregation of event trends detected by Kleene patterns under various semantics. However, our approach can be used to count the number of sequences and then select the top-k frequent of them.

\section{Conclusions}
\label{sec:conclusions}

Our \app\ approach is the first to aggregate Kleene pattern matches under various event matching semantics with optimal time complexity. To this end, \app\ incrementally maintains trend aggregates at coarse  granularity. Thus, it achieves up to four orders of magnitude speed-up and up to eight orders of magnitude memory reduction compared to state-of-the-art.

\bibliographystyle{abbrv}
\bibliography{cogra-tr}

\end{document}